\newtheorem{problem}{\textbf{Problem}}
\newcommand\B{\ensuremath{\mathtt{b}}\xspace}
\newcommand\E{\ensuremath{\mathtt{e}}\xspace}
\newcommand\BB{\ensuremath{\mathtt{B}}\xspace}
\renewcommand\AA{\ensuremath{\mathtt{A}}\xspace}
\newcommand\X{\ensuremath{\mathtt{s}}\xspace}
\newcommand\Y{\ensuremath{\mathtt{t}}\xspace}
\newcommand\zero{\ensuremath{\mathtt{0}}\xspace}
\newcommand\one{\ensuremath{\mathtt{1}}\xspace}
\newcommand\zeroAB{\AA\BB\AA\BB\AA\BB\AA}
\newcommand\oneAB{\AA\BB\AA}
\newcommand\pmlg{\textsf{PMLG}\xspace}
\newcommand\ov{\textsf{OV}\xspace}
\newcommand\seth{\textsf{SETH}\xspace}
\title{{On the Complexity of Exact Pattern Matching in Graphs: Determinism and Zig-Zag Matching}} 
\titlerunning{On the Complexity of Exact Pattern Matching in Graphs}
\author{Massimo Equi}{Department of Computer Science, University of Helsinki, Finland}{massimo.equi@helsinki.fi}{}{}
\author{Roberto Grossi}{Dipartimento di Informatica, Universit\`a di Pisa, Italy}{grossi@di.unipi.it}{}{}
\author{Veli M\"{a}kinen}{Department of Computer Science, University of Helsinki, Finland}{veli.makinen@helsinki.fi}{}{}
\author{Alexandru I. Tomescu}{Department of Computer Science, University of Helsinki, Finland}{alexandru.tomescu@helsinki.fi}{}{}
\authorrunning{M. Equi, R. Grossi, V. M\"{a}kinen, A. I. Tomescu}
\keywords{
exact pattern matching, graph query, graph search, heterogeneous networks, labeled graphs, string matching, string search, strong exponential time hypothesis, variation graphs
}
\begin{document}

\maketitle

\begin{abstract}
Exact pattern matching in labeled graphs is the problem of searching paths of a graph $G=(V,E)$ that spell the same string as the given pattern $P[1..m]$. This basic problem can be found at the heart of more complex operations on variation graphs in computational biology, query operations in graph databases, and analysis of heterogeneous networks, where the nodes of some paths must match a sequence of labels or types. In our recent work we described a conditional lower bound stating that the exact pattern matching problem in labeled graphs cannot be solved in less than quadratic time, namely, $O(|E|^{1 - \epsilon} \, m)$ time or $O(|E| \, m^{1 - \epsilon})$ time for any constant $\epsilon>0$, unless the Strong Exponential Time Hypothesis (\seth) is false. The result holds even if node labels and pattern $P$ are drawn from a binary alphabet, and $G$ is restricted to undirected graphs of maximum degree three or directed acyclic graphs of maximum sum of indegree and outdegree three. It was left open what happens on undirected graphs of maximum degree two, i.e., when the pattern can have a \emph{zig-zag} match in a (cyclic) bidirectional string. Also, the reduction created a non-determistic directed acyclic graph, and it was left open if \emph{determinism} would make the problem easier. In this work, we show through the \emph{Orthogonal Vectors} hypothesis (\ov) that the same conditional lower bound holds even for these restricted cases.
\end{abstract}

\section{Introduction}
\label{sec:typesetting-summary}

Large-scale labeled graphs are becoming ubiquitous in several areas, such as computational biology, graph databases, and graph mining. Applications require sophisticated operations on these graphs, and often rely on primitives that locate paths whose nodes have labels or types matching a pattern given at query time. We refer the reader to the introduction in Equi et al.~\cite{EGM19} for a thorough review of these applications and references therein.

The \emph{Pattern Matching in Labeled Graphs} (\pmlg) problem is as follows.
Given a labeled graph $G=(V,E,L)$ and a pattern string $P$ over the alphabet $\Sigma$, where $L : \rightarrow \Sigma^+$ represents the labeling of the nodes with strings, we say that $P$ occurs in $G$ if there is a sequence of adjacent (and not necessarily distinct) nodes $u_1, \ldots, u_k$ (where $(u_i, u_{i+1}) \in E$ for $1 \leq i < k$) such that $P$ occurs as a substring in the string obtained by their label concatenation $L(u_1) \cdots L(u_k)$. 
Given a pattern $P$ and a labeled graph $G$ over $\Sigma$,
the exact pattern matching problem on graphs requires to establish if $P$ occurs in $G$. Notice that allowing matching nodes not to be distinct is an essential property to make our problem of interest with respect to conditional hardness: With the requirement of distinctness, one can easily derive NP-hardness from the Hamiltonian path problem with a pattern consisting of $a^{|V|}$ and each node labeled with $a$. Generalization of such reductions are considered in ~\cite{LCRP16}.   

Although there is a growing need to perform pattern matching on graphs, the idea of extending the problem of string searching in sequences to pattern matching in graphs was studied over 25 years ago as a search problem in \emph{hypertext} \cite{manber1992approximate}. Interestingly, the best bounds achieved for both exact pattern matching \cite{AmirLL00} and approximate pattern matching \cite{rautiainen2017aligning} in graphs are the same: both solutions take $O(N + m|E|)$ time, where $N = \sum_{u \in V} |L(u)|$ is the total length of text strings in all nodes, $m=|P|$ is the pattern length, and $|E|$ is the number of edges in $G$.

The quadratic cost of the approximate matching in graphs by Rautiainen and Marschall~\cite{rautiainen2017aligning} is asymptotically optimal under the Strong Exponential Time Hypothesis~\cite{IP01} (\seth) as (i)~they solve the approximate string matching as a special case, since a graph consisting of just one path of $|E|+1$ nodes and $|E|$ edges is a text string of length $n=|E|+1$, and (ii)~it has been recently proved that the edit distance of two strings of length $n$ cannot be computed in $O(n^{2-\epsilon})$ time, for any constant $\epsilon>0$, unless the \emph{Strong Exponential Time Hypothesis} (\seth) is false~\cite{BI15}. Hence this conditional lower bound explains why the $O(m|E|)$ barrier has been difficult to cross. 

In our recent work \cite{EGM19}, we showed that \emph{exact and approximate pattern matching are equally hard on graphs} under \seth. Namely, we showed the conditional lower bound that an $O(|E|^{1 - \epsilon} \, m)$-time or an $O(|E| \, m^{1 - \epsilon})$-time algorithm for exact pattern matching on graphs cannot be achieved unless \seth is false. This result explains why it has been difficult to find indexing schemes for graphs with other than best case or average case guarantees for fast exact pattern matching \cite{SVM14,GMS17}.

Our reduction \cite{EGM19} covered a wide range of graphs and patterns: the reduction graph has maximum degree three, string and node labels are drawn from binary alphabet, and the graph is non-deterministic when interpreted as acyclic directed graph (DAG). While some special cases are already covered in literature (see conclusions in our earlier work \cite{EGM19}), there were two important cases left to complete the picture: 1) \emph{zig-zag matching} in a (cyclic) bidirectional string (undirected graph of maximum degree two), and 2) matching on \emph{deterministic directed acyclic graphs}. In this work, we give two new reductions that show that the same conditional lower bound holds even in these restricted cases.

Like our previous reduction \cite{EGM19}, our new reductions share some similarities with those for string problems \cite{BI15,BK15,ABW15,BI16,BZ17}; now we also inherit the use of the \emph{Orthogonal Vectors} hypothesis (\ov) to establish the connection with \seth, while earlier \cite{EGM19} we gave a direct reduction from \seth. The closest connection is with a conditional hardness on several forms of regular expression matching \cite{BI16}. Our reduction has additional constructions to allow a pattern to match inside any type of graphs, while a \emph{non-deterministic finite automaton} (NFA) built on top of those specific regular expressions would accept only certain sub-patterns. For zig-zag matching the reduction is more intricate as the underlying graph has less structure. To cover the deterministic case, the new reduction has a special design to enable local merging of graph substructures that avoids exponential growth. This part of the reduction is especially interesting, as converting an NFA into a \emph{deterministic finite automaton} (DFA) can take exponential time \cite{RS59}, but this gap appears to have no significance with regards to hardness of exact pattern matching. This result complements recent findings about \emph{Wheeler graphs} \cite{GMS17,GT19,APP19}. Wheeler graphs are a class of graphs that admit an index structure supporting linear time exact pattern matching. Gibney and Thankachan \cite{GT19} show that it is NP-complete to recognize whether a (non-deterministic) DAG is a Wheeler graph. Alanko, Policriti, and Prezza \cite{APP19} give a linear time algorithm for recognizing whether a deterministic DAG is a Wheeler graph. Our result shows that converting an arbitrary deterministic DAG into an equivalent Wheeler graph should take quadratic time unless \seth fails.

\section{Definitions}
In this work we present two reductions from the \emph{Orthogonal Vectors} (\ov) problem to the two special cases, deterministic and zig-zag, of the pattern matching in labeled graphs (\pmlg) problem. These reductions will eventually let us conclude that it is not possible to solve \pmlg even in these special cases either in $O(|E|^{1-\epsilon}m)$ or $O(|E|m^{1-\epsilon})$ without contradicting the \ov Hypothesis, complementing the earlier results \cite{EGM19}. Contradiction with \ov Hypotheses implies contradiction with \seth~\cite{Wil05}.  

Before giving the formal definitions of the \pmlg problem, we recall the \ov Hypothesis. In the \ov problem we are given two sets $X, Y \subseteq \{ 0,1 \}^d$ such that $|X| = |Y| = n$ and $d = \omega(\log n)$, and we are asked to determine whether or not there exist $x \in X$ and $y \in Y$ such that $x \cdot y = 0$, where $x \cdot y = \Sigma_{i=0}^d \; x[i] \cdot y[i]$. The \ov Hypothesis~\cite{Wil05} states that for any constant $\epsilon > 0$, no algorithm can solve \ov in $O(n^{2-\epsilon}\text{poly}(d))$ time.
We can consider now the \pmlg problem. 

\begin{definition}[Labeled graph]
\label{definition:labeledgraph}
    Given an alphabet $\Sigma$, a \emph{labeled graph} $G$ is a triplet $(V,E,L)$ where $(V,E)$ is a directed or undirected graph and $L: V \mapsto \Sigma^+$ is a function that defines which string (i.e., \emph{label}) over $\Sigma$ is assigned to each node.
    
    If $G$ is a directed graph, then we say that $G$ is \emph{deterministic} if for any node, it holds that the first symbol in the labels of any two out-neighbors is always different.
    
    A node labeled with a character $\sigma \in \Sigma$ is called a \emph{$\sigma$-node}. And edge whose endpoints have labels $\sigma_1$ and $\sigma_2$, respectively, is called a \emph{$\sigma_1\sigma_2$-edge}.
\end{definition}

\begin{definition}[Match]
    Let $u_1, \ldots, u_j$ be a path in a labeled graph $G$ (paths are allowed to repeat nodes) and let $P$ be a pattern. Also, let $L(u)[l:]$ and $L(u)[:l']$ denote the suffix of $L(u)$ starting at position $l$, and the prefix of $L(u)$ ending at position $l'$, respectively. We say that $u_1, \ldots, u_j$ is a \emph{match} for $P$ in $G$ with offset $l$ if the concatenation of the strings $L(u_1)[l:] \cdot L(u_2) \cdot \ldots \cdot L(u_{j-1}) \cdot L(u_j)[:l']$ equals $P$, for some $l'$.
\end{definition}

\begin{problem}[Pattern Matching in Labeled Graphs (\pmlg)]
\item{\textsc{input}:} A labeled graph $G = (V,E,L)$ and a pattern $P$, both over an alphabet $\Sigma$.
\item{\textsc{output}:} All the matches for $P$ in $G$.
\end{problem}

Before presenting our main result, we give a new proof regarding the general \pmlg problem. 
\begin{theorem} \label{theorem:Emlowerbound}
For any constant $\epsilon > 0$, the Pattern Matching in Labeled Graphs (\pmlg) problem for a binary alphabet cannot be solved in either $O(|E|^{1-\epsilon} \, m)$ or $O(|E| \, m^{1-\epsilon})$ time unless the \ov Hypothesis fails. 
\end{theorem}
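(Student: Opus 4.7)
The plan is to reduce \ov to \pmlg. Given $X, Y \subseteq \{0,1\}^d$ with $|X| = |Y| = n$, I construct a binary-alphabet labeled graph $G$ with $|E| = O(nd)$ and a binary pattern $P$ of length $O(nd)$ such that $P$ matches in $G$ iff some pair $(x_k, y_j) \in X \times Y$ satisfies $x_k \cdot y_j = 0$. Then any $O(|E|^{1-\epsilon} m)$ or $O(|E| m^{1-\epsilon})$ algorithm for \pmlg immediately yields an $O(n^{2-\epsilon} \text{poly}(d))$ algorithm for \ov, contradicting the \ov Hypothesis.

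Encode each vector bit by a short binary block so that no occurrence of a distinguished binary separator string $\sigma$ arises inside encoded vectors (for instance, encode bit $0$ as $\texttt{10}$, bit $1$ as $\texttt{11}$, and let $\sigma = \texttt{000}$). Define $P = \sigma \cdot \text{enc}(x_1) \cdot \sigma \cdot \text{enc}(x_2) \cdots \sigma \cdot \text{enc}(x_n) \cdot \sigma$. The graph $G$ has four glued pieces: (i) a staircase DAG $F_L$ of depth $\Theta(nd)$ that can non-deterministically spell any binary string up to its depth; (ii) a single multiplexer node $M$; (iii) for each $y_j \in Y$ a check gadget $H_j$ whose valid paths spell exactly the strings of the form $\text{enc}(v)$ with $v \cdot y_j = 0$, framed by the appropriate portions of $\sigma$, realized by branching at coordinates $\ell$ where $y_j[\ell] = 0$ and being deterministic where $y_j[\ell] = 1$; and (iv) a symmetric staircase DAG $F_R$. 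Edges join each of the $n$ cut-points of $F_L$ (one per separator boundary of $P$) to $M$, the node $M$ fans out to every $H_j$, and each $H_j$ feeds into $F_R$. This shared-multiplexer routing keeps the number of transition edges linear in $n$, so $|E| = O(nd)$.

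Correctness follows by a length-accounting and separator-alignment argument. Each individual component has depth strictly less than $|P|$, so any walk spelling $P$ must fully traverse $F_L \to M \to H_j \to F_R$ in this order for some $j$. Since $\sigma$ never appears inside any $\text{enc}(v)$, the copy of $\sigma$ produced by $H_j$ is forced to align with one of the $n$ separator positions of $P$, pinning the middle portion to be some $\text{enc}(x_k)$, which $H_j$ can spell precisely when $x_k \cdot y_j = 0$. The converse direction is immediate: any orthogonal pair yields the intended walk, with $F_L$ and $F_R$ freely spelling the prefix and suffix of $P$ around the chosen $\text{enc}(x_k)$. The main obstacle I expect is enforcing $|E| = O(nd)$ rather than $O(n^2 d)$: a naive routing uses $\Theta(n^2)$ transition edges between the cut-points of $F_L$ and the entries of the $H_j$, and introducing the shared multiplexer $M$ (justified by the fact that every valid cut-point aligns with the same character of $\sigma$, so a single node label suffices) is what keeps the construction within the target bound.
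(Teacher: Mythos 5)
Your high-level strategy matches the paper's: reduce \ov to \pmlg via a pattern that concatenates encodings of the $x_i$, a family of ``check'' gadgets (one per $y_j$) whose valid paths spell exactly the $\text{enc}(v)$ with $v\cdot y_j=0$, and ``wildcard'' structures on either side that absorb the unchecked sub-patterns, keeping $|E|, m = O(nd)$ so that a subquadratic \pmlg algorithm would break the \ov hypothesis. Where you genuinely diverge is in two places. First, the paper routes between wildcard and check gadgets through a \emph{diagonal} arrangement: $G_{U1}$ consists of $2n-2$ chained ``jolly'' blocks with $G_{U1}^{(n+j-2)}$ feeding $G_W^{(j)}$, which gives $O(n)$ crossing edges without any shared hub. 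Your single multiplexer node $M$ is an alternative, arguably cleaner, way to reach the same edge count, and it works only because the wildcard parts in your construction are staircases that can spell \emph{any} binary string (so the alignment information lost at $M$ can be recovered by the separator inside $H_j$), rather than the paper's jolly blocks which only spell strings of the sub-pattern shape. Second, the paper works over a $4$-symbol alphabet $\{\B,\E,\zero,\one\}$ with explicit start/end markers $\B\B$ and $\E\E$ that pin the two endpoints of every occurrence, and only afterwards re-encodes to binary; you build directly in binary and pin nothing at the endpoints.

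That second difference is where your write-up has concrete gaps. (i) The claim that ``each individual component has depth strictly less than $|P|$'' implies the walk ``must fully traverse $F_L\to M\to H_j\to F_R$'' is too strong and does not follow: an occurrence may legitimately start at $M$ or inside an $H_j$, or end before entering $F_R$. What you actually need, and what does hold with the right choice of depths for $F_L$ and $F_R$, is the weaker statement that any occurrence passes through $M$ and spells the \emph{entire} $\text{enc}(v)$ portion of some $H_j$; this requires a tighter length budget argument than you give (roughly: $\text{depth}(F_L)+1+\text{depth}(F_R)+1 < |P|$ so neither staircase can absorb the whole pattern, and the residual left for $H_j$ after saturating $F_L$ and $F_R$ is at least $|\text{enc}(v)|$). (ii) Your separator choice $\sigma=\texttt{000}$ is not occurrence-free in $P$: when $\text{enc}(x_k)$ ends with the block $\texttt{10}$, the boundary $\text{enc}(x_k)\cdot\sigma$ contains $\texttt{0000}$, so $\texttt{000}$ also appears shifted by one. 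You only argued $\sigma$ does not appear \emph{inside} $\text{enc}(v)$, not at block boundaries. It turns out this is harmless because every $\text{enc}(v)$ begins with $\texttt{1}$ and hence the shifted alignment forces a mismatch at the first symbol of $\text{enc}(v)$, but that check is load-bearing and must be made explicit, since the whole correctness of the multiplexer rests on the separator-alignment argument. (iii) You never specify where the cut-points of $F_L$ sit or where $H_j$ enters $F_R$; in fact, a single exit at the bottom of $F_L$ and a single entry at the top of $F_R$ already suffice because the staircases can spell arbitrary binary prefixes/suffixes, and making this explicit would both simplify your construction and make the length-budget argument in (i) cleaner.
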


An analogous statement was proven in our earlier work \cite{EGM19} with respect to \seth. Although that reduction \cite{EGM19} can easily be modified to use \ov instead, converting it to cover the deterministic case appears not to be easy. Therefore we give a new reduction that can be strengthened to yield the following result. 

\begin{theorem}
\label{cor:deterministic-dag}
The conditional lower bound stated in Theorem \ref{theorem:Emlowerbound} holds even if it is restricted to labeled \emph{deterministic} directed acyclic graphs (DAGs) with maximum sum of outdegree and indegree three.
\end{theorem}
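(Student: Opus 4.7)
The plan is to reduce \ov to \pmlg restricted to deterministic binary DAGs of maximum total degree three, building a graph $G$ and pattern $P$ of sizes $|E|,m=O(n\,\mathrm{poly}(d))$ such that $P$ occurs in $G$ iff the \ov instance admits an orthogonal pair. Each vector $x\in X$ is turned into a short \emph{$x$-gadget}: a deterministic sub-DAG reading $x$ bit by bit, where $x[i]=0$ is realised as a small diamond whose two out-edges are labeled $\zero$ and $\one$, and $x[i]=1$ as a single out-edge labeled $\one$. Each vector $y\in Y$ is turned into a \emph{$y$-block} inside $P$ whose $i$-th character is $\zero$ if $y[i]=1$ and $\one$ if $y[i]=0$. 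Because the graph is deterministic over a binary alphabet, the $y$-block must use the $\zero$-edge at position $i$ precisely when $y[i]=1$; that edge is available iff $x[i]=0$, so the $y$-block traverses the $x$-gadget iff $x\cdot y=0$. Spacer nodes between successive bit gadgets guarantee that the merging vertex of each diamond (in-degree two, out-degree one) never coincides with a branching vertex (in-degree one, out-degree two), keeping every total degree at most three.

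The second step is to assemble the $n$ $x$-gadgets into a single deterministic DAG while keeping $|E|=O(n\,\mathrm{poly}(d))$. The gadgets are threaded along a common backbone and separated by short \emph{coordinate} strings over $\{\zero,\one\}$ that act as self-synchronising markers: these are the only places at which the match may cross from one $x$-gadget to the next, and they are crafted so that crossing always breaks the alignment between pattern and graph. The local merging technique hinted at in the introduction then collapses any equal-labeled prefixes that the separators induce near gadget interfaces, which keeps every branching vertex deterministic (its two out-neighbors still begin with distinct symbols $\zero$ and $\one$) and every in-merge limited to exactly two predecessors. The pattern $P$ is the concatenation of all $y$-blocks wrapped in the matching separators, so $P$ has $\Theta(n)$ candidate starting alignments, one per pair $(x_j,y_i)$.

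Correctness reduces to two inductions over the bit positions, showing that (a)~every orthogonal pair $(x,y)$ induces a genuine path in $G$ spelling the corresponding substring of $P$, and (b)~every match in $G$ must align a single $y$-block with a single $x$-gadget, hence witnesses an orthogonal pair. Since $|V|,|E|,m=O(n\,\mathrm{poly}(d))$, an algorithm running in $O(|E|^{1-\epsilon}m)$ or $O(|E|\,m^{1-\epsilon})$ time would solve \ov in $O(n^{2-\epsilon}\,\mathrm{poly}(d))$ time, contradicting the \ov hypothesis. The main obstacle is this second step: after local merging, one must still show that no match can jump from one $x$-gadget into the middle of another while locally passing the orthogonality check. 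Making the separators long enough to be self-synchronising yet short enough that the total size remains linear in $n$, and compatible with both determinism and the degree-three constraint, is the delicate combinatorial design; once these separators are pinned down, the DAG and degree properties follow structurally and the conditional lower bound is immediate.
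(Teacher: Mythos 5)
Your high-level plan matches the paper's in spirit: reduce from \ov, encode one vector family into $d$-long bit gadgets in a DAG and the other into bit blocks of the pattern, pad with self-matching ``jolly'' structure so the one orthogonal alignment can be embedded into a full occurrence, and check that the instance size stays $O(n\cdot\mathrm{poly}(d))$. (You swap the roles of $X$ and $Y$ relative to the paper and phrase things via edge labels rather than node labels, but neither is a substantive difference.) The problem is that you stop exactly where the theorem's real content begins.

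The whole point of Theorem~\ref{cor:deterministic-dag} is to exhibit a \emph{concrete deterministic} gadget arrangement whose size does not blow up. In the non-deterministic construction the only offending nodes are the separator $\E$-nodes, each of which has two outgoing $\B$-labeled edges (one continuing to the next ``real'' gadget, one branching into the padding gadget). Your proposal says that ``local merging'' will ``collapse equal-labeled prefixes near gadget interfaces,'' but applied naively this is a subset construction and, as the paper itself warns, can cost exponential time and space; nothing in your sketch rules that out. You also candidly concede that the synchronizing-separator design is ``the delicate combinatorial design'' you have not pinned down. That delicate design \emph{is} the proof. The paper resolves it by a very specific rewiring: inside each orthogonality gadget $G_W^{(j)}$, every position that is missing a $\one$-node gets one added, and the added node is routed into a \emph{partial} copy of the jolly gadget $G_{U1}$ whose tail reconnects to the entry node of $G_W^{(j+1)}$; the direct $\E\B$-edge from $e_W^{(j)}$ to $b_W^{(j+1)}$ is then deleted. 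After this surgery each internal node has at most one $\zero$-successor and one $\one$-successor, each $\E$-node has at most one $\B$-successor and one $\E$-successor, the size stays $O(nd)$ because each $G_W^{(j)}$ contributes at most one partial copy of a $G_{U1}$ sub-gadget, and correctness survives because the shortest-path-length argument of Lemma~\ref{lemma:samej} still forces any matched $\B P_{x_i}\E$ to stay within one $j$-slice. None of this is implied by ``collapse equal-labeled prefixes,'' and without it your argument does not establish determinism, let alone determinism with linear size. So the proposal is an accurate outline of the surrounding scaffolding, but it has a genuine gap at the step the theorem is about.
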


The new reduction does not cover the other special case --- zig-zag matching in bidirectional string. For this we give another reduction and prove the following result.

\begin{theorem}
\label{thm:zig-zag}
The conditional lower bound stated in Theorem \ref{theorem:Emlowerbound} holds even if it is restricted to undirected graphs of maximum degree two, where the pattern and the node labels are drawn from an alphabet of size at least 6. 
\end{theorem}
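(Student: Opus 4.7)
\textbf{Proof plan for Theorem~\ref{thm:zig-zag}.} The plan is to reduce \ov to \pmlg restricted to undirected graphs of maximum degree two, using an alphabet of six symbols. Given an \ov instance $X=\{x_1,\ldots,x_n\}$, $Y=\{y_1,\ldots,y_n\}\subseteq\{0,1\}^d$, I would construct a labeled undirected path $G$ (a ``bidirectional string'') of length $\Theta(nd)$, together with a pattern $P$ of length $\Theta(nd)$, such that $P$ has a match in $G$ if and only if there is a pair with $x_i\cdot y_j=0$. Because a match in a degree-two undirected graph is exactly a walk that is free to reverse direction at any node, the crux of the reduction is to force the match to zig-zag in a very specific way that simulates the branching that was available in previous higher-degree reductions.

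The text underlying $G$ would be laid out as $g(x_1)\,g(x_2)\cdots g(x_n)\,M\,g(y_n)\cdots g(y_2)\,g(y_1)$, where each $g(x_i)$ and $g(y_j)$ is a vector gadget of $\Theta(d)$ symbols, $M$ is a central ``turn'' block, and anchor symbols separate consecutive gadgets. Beyond the bit symbols $\mathtt{0},\mathtt{1}$, I reserve four extra alphabet symbols to mark: $(i)$~$X$-gadget boundaries, $(ii)$~$Y$-gadget boundaries, $(iii)$~the central anchor $M$, and $(iv)$~an intra-gadget coordinate separator; six symbols in total. The pattern $P$ is composed of an $X$-selector prefix, a zig-zag body of length $\Theta(d)$, and a $Y$-selector suffix. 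The selector prefix and suffix use the offset at which $P$ begins to match inside the long concatenation of gadgets to ``choose'' a specific $x_i$ and $y_j$, respectively. The zig-zag body contains, for each coordinate $k$, a short sub-pattern of constant length that crosses the central anchor $M$, so that matching it forces the walk to touch the $k$-th bit of $x_i$ and of $y_j$; the sub-pattern is designed (via the anchor symbols inserted around each bit) so that it can be matched exactly when $x_i[k]$ and $y_j[k]$ are not both $\mathtt{1}$. Chaining the $d$ coordinate tests therefore enforces $x_i\cdot y_j=0$.

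The main obstacle, and the reason the two extra anchor symbols beyond $\{\mathtt{0},\mathtt{1}\}$ are needed, is \emph{zig-zag coherence}: we must prevent a spurious match that tests coordinate $k$ against a pair $(x_i,y_j)$ but coordinate $k+1$ against a different pair $(x_{i'},y_{j'})$. I address this by carefully tuning the lengths of the vector gadgets, the anchor blocks, and the intra-coordinate spacing, so that each round-trip from an $X$-gadget to a $Y$-gadget and back has a \emph{fixed} length that is uniquely compatible with a single pair $(i,j)$ of positions in $G$; any attempt to jump to a neighbouring gadget between two consecutive coordinate tests necessarily collides with an anchor symbol in $P$ and produces a mismatch. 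Once this invariant is proved, correctness of the reduction follows: a valid match exists precisely when some pair of vectors is orthogonal. Since $|E|=\Theta(nd)$ and $|P|=\Theta(nd)$, an \pmlg algorithm running in $O(|E|^{1-\epsilon}m)$ or $O(|E|\,m^{1-\epsilon})$ time would yield an $O(n^{2-\epsilon}\,\mathrm{poly}(d))$ algorithm for \ov, contradicting the \ov Hypothesis and hence, by~\cite{Wil05}, also \seth.
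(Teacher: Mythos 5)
Your reduction takes a genuinely different route from the paper, but it has a gap that looks fatal to me. You place \emph{both} vector sets in the graph, laid out as $g(x_1)\cdots g(x_n)\,M\,g(y_n)\cdots g(y_1)$, and you have the zig-zag body cross the central block $M$ once per coordinate. In a degree-two undirected graph a walk from (a bit of) $g(x_i)$ across $M$ to (a bit of) $g(y_j)$ must \emph{traverse and spell out} every intervening gadget $g(x_{i+1}),\ldots,g(x_n)$ and $g(y_n),\ldots,g(y_{j+1})$ on the way; there is no way to skip them. That has two consequences. First, the length accounting breaks: a single round trip from $x_i$ to $y_j$ costs $\Theta\bigl((n-i)d+(n-j)d\bigr)$ symbols, so the $d$ coordinate tests alone need $\Omega(d^2)$, and up to $\Omega(nd^2)$, pattern symbols, contradicting the claimed $\Theta(d)$ body and $\Theta(nd)$ pattern. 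Second, and more seriously, those intervening symbols are arbitrary bits of other vectors; a fixed pattern cannot exactly match all the possible bit sequences sitting between $g(x_i)$ and $M$ unless you turn them into wildcards, which you cannot do in exact matching. Finally, you say the round-trip lengths are ``uniquely compatible with a single pair $(i,j)$'' given the offset, which means each starting offset pins down one pair; that gives only $\Theta(n)$ testable pairs out of $n^2$, so the reduction wouldn't cover all of $X\times Y$ anyway.

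The paper avoids all three problems by keeping $X$ in the pattern and $Y$ in the graph (exactly as in the base reduction) and making the zig-zag \emph{strictly local}. Concretely, $\one$ is encoded as $\AA\BB\AA$ and $\zero$ as $\AA\BB\AA\BB\AA\BB\AA$; the long $\zero$ string can match the short $\one$ graph gadget by bouncing forward--backward--forward within a single coordinate's segment, while $\one$ cannot stretch to match a $\zero$ gadget. The $\X$ separators fence each coordinate so the zig-zag never leaves its own segment (Lemma~\ref{lemma:one-to-one_match}). Alignment freedom across the $n$ sub-patterns and $n$ sub-gadgets is then supplied by ``jolly'' gadgets flanking each $LG_W^{(j)}$ inside the chain, and a small parity issue is handled by running two patterns $P'^{(1)},P'^{(2)}$. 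If you want to salvage your idea, the key missing ingredient is to replace the graph-wide zig-zag with a constant-diameter local one, and to put one of the two vector sets back into the pattern so that alignment freedom comes from pattern/graph sub-block choice rather than from long traversals.
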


\section{Undirected Graphs}
\label{sec:reduction}
Given an instance of \ov with sets $X$ and $Y$, we will construct a pattern $P$ and a graph $G$ such that $P$ will have a match in $G$ if and only if there exist a pair of orthogonal vectors between $X$ and $Y$. We first describe how to build the pattern $P$ and then how to obtain the graph $G$.

\subsection{Pattern}
Pattern $P$ is defined over the alphabet $\Sigma = \{ \B,\E,\zero,\one \}$, using the first set of vectors $X = \{ x_1, \ldots, x_n\}$, as the concatenation $P = \B\B P_{x_1}\E\,\B P_{x_2}\E \ldots \B P_{x_n}\E\E$, where for all $h \in \{1,\dots,d\}$, the $h$-th symbol of string $P_{x_i}$ is either \zero or \one, such that $P_{x_i}[h] = \one$ if and only of $x_i[h] = 1$.
We thus view the vectors in $X$ as sub-patterns which are concatenated placing separator characters. Note that the pattern starts with two consecutive characters \B\ and ends with two consecutive characters \E. Such strings are found nowhere else in the pattern, thus they mark its beginning and its end. 


\subsection{Graph}
\label{sub:gadgets}

\begin{sloppypar}
The gadget implementing the main logic of the reduction is an undirected graph $G_W = (V_W,E_W,L_W)$, illustrated in \figurename~\ref{fig:G_W} and defined as follows, starting from the second set of vectors $Y = \{ y_1, \ldots, y_{n}\}$.
\begin{figure}[ht]
    \centering
    \vspace{20pt}
    \par \noindent
    \includegraphics[scale=.60]{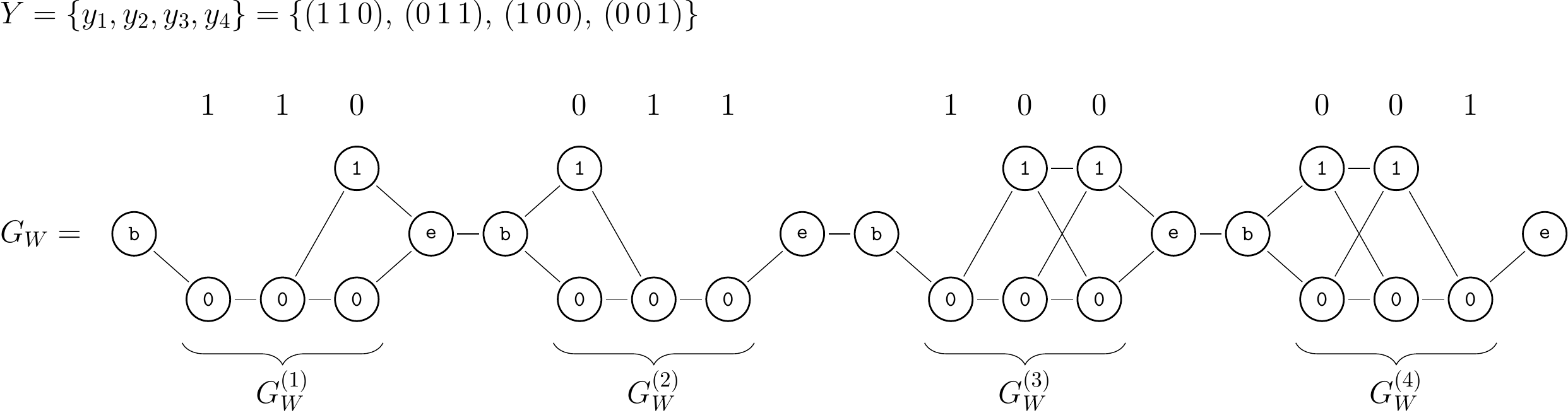}
    \par
    \vspace{20pt}
    \caption{ Gadget $G_W$. }
    \label{fig:G_W}
\end{figure} Set $V_W$ can be seen as $n$ conceptual \emph{groups} of nodes $V_W^{(1)}, V_W^{(2)}, \ldots, V_W^{(n)}$, one for each vector in $Y$, organized in $n$ sub-structures $G_W^{(1)} = (V_W^{(1)}, E_W^{(1)})$, \ldots, $G_W^{(n)} = (V_W^{(n)}, E_W^{(n)})$. See Figure~\ref{fig:G_W} for an example. 
\end{sloppypar}

The idea is to make $G_W^{(j)}$ accept some sub-pattern $P_{x_i}$ if and only if $x_i \cdot y_j = 0$. When building $V_W^{(j)}$ we scan every entry $y_j[h]$ of vector $y_j \in Y$, for $1 \leq h \leq d$. If $y_j[h] = 1$ then vector $x_i$ must be accepted if and only if $P_{x_i}[h] = \zero$, hence we place only a $\zero$-node $v^0_{j h}$ (i.e., labeled with $\zero$). Instead, if $y_j[h] = 0$, the value of vector $x_i$ in position $h$ does not matter and we place both a $\zero$-node $v^0_{j h}$ and a $\one$-node $v^1_{j h}$ (i.e., labeled with \zero\ and \one, respectively).
For each $y_j \in Y$, set $V_W$ also contains some special nodes: a begin $\B$-node $b_W^{(j)}$ and an end $\E$-node $e_W^{(j)}$ (i.e., labeled with \B and \E, respectively). Formally, we have:
\begin{align*}
V_W^{(j)} =& \{ v^0_{j h} \, \mid \,  y_j \in Y, \, 1 \leq h \leq d \} \cup \{ v^1_{j h} \, \mid \, y_j[h] = 0, \; y_j \in Y, \, 1 \leq h \leq d \},\\
\\
V_W =& \bigcup\limits_{j=1}^n \left( V_W^{(j)} \cup \{ b_W^{(j)}, e_W^{(j)} \} \right).
\end{align*}

The edges in set $E_W$ properly connect both the nodes inside each group $V_W^{(j)}$ and the groups themselves with each other. In group $V_W^{(j)}$, node $b_W^{(j)}$ is connected with $v^0_{j 1}$ and, if present, with $v^1_{j 1}$. Also, we place edges connecting nodes $v^0_{j d}$ and $v^1_{j d}$ (if present) with node $e_W^{(j)}$.  Moreover, there is an edge for every pair of nodes that share the same $j$ and are consecutive in terms of $h$ coordinate (e.g., $v^1_{j h}, v^0_{j \, h+1}$), for $1 \leq j \leq n$ and $1 \leq h \leq d$. When considering a single $j$, the edges described so far form set $E_W^{(j)}$, namely the edges connecting the nodes in group $V_W^{(j)}$. In this way we obtain sub-structure $G_W^{(j)} = (V_W^{(j)}, E_W^{(j)})$.
We combine such sub-structures together by connecting each $G_W^{(j)}$ to its predecessor via edge $(e_W^{(j-1)}, b_W^{(j)})$, for all $2 \leq j \leq n$. Thus, the final set of edges is:
\begin{align*}
E_W^{(j)} =& \{ \left(v^0_{j h}, v^0_{j \, h+1}\right) \mid v^0_{j h}, v^0_{j \, h+1} \in V_W\} \cup \{ \left(v^0_{j h}, v^1_{j \, h+1}\right) \mid v^0_{j h}, v^1_{j \, h+1} \in V_W \} \, \cup\\
& \{ \left(v^1_{j h}, v^0_{j \, h+1}\right) \mid v^1_{j h}, v^0_{j \, h+1} \in V_W \} \cup \{ \left(v^1_{j h}, v^1_{j \, h+1}\right) \mid v^1_{j h}, v^1_{j \, h+1} \in V_W \},\\
\\
E_W =& \bigcup\limits_{j=1}^n E_W^{(j)}\\
&\cup \{ \left(b_W^{(j)}, v^0_{j 1}\right) \mid b_W^{(j)}, v^0_{j 1} \in V_W \} \cup \{ \left(b_W^{(j)}, v^1_{j 1}\right) \mid b_W^{(j)}, v^1_{j 1} \in V_W \} \\
&\cup \, \{ \left(v^0_{j d}, e_W^{(j)}\right) \mid v^0_{j d}, e_W^{(j)} \in V_W \} \cup \{ \left(v^1_{j d}, e_W^{(j)}\right) \mid v^1_{j d}, e_W^{(j)} \in V_W\} \\
&\cup \, \{ \left(e_W^{(j)}, b_W^{(j+1)}\right) \mid e_W^{(j)}, b_W^{(j+1)} \in V_W \} \; .
\end{align*}

We now observe that pattern occurrences in $G_W$ have some combinatorial properties.

\begin{lemma} 
\label{lemma:samej}
If sub-pattern $\B P_{x_i} \E$ has a match in $G_W$ then
 all the nodes matching $P_{x_i}$ share the same $j$ coordinate and
have distinct and consecutive $h$ coordinates (i.e., $P_{x_i}$ has a match in $G_W^{(j)}$).
\end{lemma}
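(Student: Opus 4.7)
My plan is to exploit the rigidity of the gadget $G_W$: the only $\B$-nodes are the $b_W^{(j)}$ and the only $\E$-nodes are the $e_W^{(j)}$, so any match of $\B P_{x_i} \E$ must begin at some $b_W^{(j_0)}$ and end at some $e_W^{(j_1)}$, with the $d$ \emph{interior} positions (those consuming the characters of $P_{x_i}$) occupied by $\zero/\one$-nodes of type $v^0_{jh}$ or $v^1_{jh}$. From there, the argument is split into ``the $j$ coordinate never changes'' and ``the $h$ coordinates are exactly $1, 2, \ldots, d$''.

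First I would pin down the endpoints of the interior walk. Inspecting $E_W$, the only $\zero/\one$-neighbours of $b_W^{(j_0)}$ are $v^0_{j_0 \, 1}$ and, if present, $v^1_{j_0 \, 1}$; symmetrically for $e_W^{(j_1)}$. Hence the interior walk starts at $h = 1$ inside $V_W^{(j_0)}$ and ends at $h = d$ inside $V_W^{(j_1)}$. Next, I would show that the interior walk cannot leave a single group. For any interior $\zero/\one$-node $v^\sigma_{jh}$, its $\zero/\one$-labelled neighbours are exactly the nodes $v^{\sigma'}_{j, h-1}$ and $v^{\sigma'}_{j, h+1}$ that exist in $V_W^{(j)}$; the only other neighbours are $b_W^{(j)}$ (when $h=1$) or $e_W^{(j)}$ (when $h=d$), which are $\B$- or $\E$-labelled and thus cannot be consumed by an interior character of $P_{x_i}$. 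So every interior step keeps $j$ fixed, forcing $j_0 = j_1 = j$ and giving the shared $j$ coordinate.

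Once the match is confined to $G_W^{(j)}$, the interior walk consists of $d-1$ steps from $h=1$ to $h=d$ where each step changes $h$ by exactly $\pm 1$. Since the graph distance between $h=1$ and $h=d$ along such $\pm 1$ moves is already $d-1$, the walk has no room for a backward step and must be strictly monotone, visiting $h = 1, 2, \ldots, d$ each exactly once. This yields the ``distinct and consecutive'' conclusion and places the whole match of $P_{x_i}$ inside $G_W^{(j)}$.

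The main obstacle I anticipate is being precise about the flexibility allowed by the \emph{Match} definition: paths are permitted to repeat nodes and the graph is undirected, so \emph{a priori} an interior zig-zag is not ruled out combinatorially. The key step that kills this is the parity/minimum-distance argument together with the observation that the only edges whose \emph{both} endpoints are in $\{\zero, \one\}$ go between consecutive $h$ inside a single $V_W^{(j)}$. After making this explicit, the rest of the proof reduces to a short case analysis of the available neighbours at the boundary of each sub-structure.
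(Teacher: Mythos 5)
Your proof is correct, and it reaches the same conclusion as the paper, but you organize the argument somewhat differently, in a way I find cleaner. The paper first argues that the match must start at some $b_W^{(j)}$, then argues it cannot reach $e_W^{(j')}$ for any $j' < j$ (because the character after $\B$ is not $\E$), and cannot reach $e_W^{(j'')}$ for $j'' > j$ (because $\B P_{x_i}\E$ has $d+2$ characters while the shortest $b_W^{(j)}$-to-$e_W^{(j)}$ path already uses $d+2$ nodes). You instead make the single \emph{local} observation that every edge of $E_W$ with both endpoints in $\{\zero,\one\}$-nodes stays inside one group $V_W^{(j)}$ and changes $h$ by exactly $\pm 1$, while the only other neighbours of such nodes are $\B$- and $\E$-labelled and hence cannot absorb any character of $P_{x_i}$. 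This immediately forces $j_0 = j_1$, with no need to reason about reachability of other $e$-nodes in two separate directions. Both proofs then finish with the same distance/counting argument: $d-1$ steps of magnitude $\pm 1$ from $h=1$ to $h=d$ must all be $+1$, ruling out repeated or skipped $h$-coordinates. Your separation of ``$j$ is invariant under interior steps'' from ``$h$ must be strictly monotone'' is a slightly more elementary and arguably more robust decomposition of the same argument.
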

\begin{proof}
A match for sub-pattern $\B P_{x_i} \E$ must start at node $b_W^{(j)}$ and end at node $e_W^{(j)}$, for some $j$. We can show that such nodes and all the other matching nodes must share the same $j$ coordinate, i.e., they all belong to $G_W^{(j)}$. Indeed, once character \B\ has been matched with node $b_W^{(j)}$, it is not possible to take the path going through node $e_W^{(j-1)}$, since the next character in the sub-pattern is not an \E. Hence, by construction it is not possible to reach any node $e_W^{(j')}$ for $j' < j$. The only way to reach any $e_W^{(j'')}$ such that $j'' > j$ is by following the paths that go through nodes $v^0_{j h}$ or $v^1_{j h}$ for every $1 \leq h \leq d$ and $e_W^{(j)}$. Sub-pattern $\B P_{x_i} \E$ has $d+2$ characters and the shortest path from $b_W^{(j)}$ to $e_W^{(j)}$ is $d+2$ nodes long ($b_W^{(j)}$ and $e_W^{(j)}$ included). Thus, $\B P_{x_i} \E$ has not enough characters to reach any $e_W^{(j'')}$, for $j'' > j$. Moreover, if one node in the path from $b_W^{(j)}$ to $e_W^{(j)}$ is matched twice, sub-pattern $\B P_{x_i} \E$ has not enough characters to reach $e_W^{(j)}$. Finally, if one of the matching nodes were not consecutive in terms of $h$ coordinate, by construction we know that we would not be following the shortest path to $e_W^{(j)}$ hence it would not be possible to complete the match. Therefore, every matching node lies on a simple path from $b_W^{(j)}$ to $e_W^{(j)}$, and has distinct and consecutive $h$ coordinates.
\end{proof}


\begin{lemma} 
\label{lemma:matchingGW}
Sub-pattern $\B P_{x_i}\E$ has a match in $G_W$ if and only if there exist $y_j \in Y$ such that $x_i \cdot y_j = 0$.
\end{lemma}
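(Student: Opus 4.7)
The plan is to use Lemma \ref{lemma:samej} to immediately restrict attention to a single sub-structure $G_W^{(j)}$, and then read off what the match requires in a per-coordinate manner that mirrors the definition of orthogonality.

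First, I would invoke Lemma \ref{lemma:samej} to conclude that any match for $\B P_{x_i} \E$ starts at some $b_W^{(j)}$, ends at the corresponding $e_W^{(j)}$, and traverses exactly one intermediate node at every coordinate $h = 1,\dots,d$, which must be either $v^0_{jh}$ or $v^1_{jh}$. In particular, for each $h$, the character $P_{x_i}[h]$ matches the label of the chosen node, so only the node whose label equals $P_{x_i}[h]$ can appear in the path.

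For the forward direction, suppose $\B P_{x_i} \E$ matches inside $G_W^{(j)}$. Consider any coordinate $h$ with $y_j[h] = 1$. By construction of $V_W^{(j)}$, only $v^0_{jh}$ exists at coordinate $h$, so the path is forced through $v^0_{jh}$ and therefore $P_{x_i}[h] = \zero$, i.e.\ $x_i[h] = 0$. Consequently $x_i[h] \cdot y_j[h] = 0$ for every $h$, giving $x_i \cdot y_j = 0$. For the backward direction, assume $x_i \cdot y_j = 0$. I would build a matching path in $G_W^{(j)}$ explicitly: for each $h$, choose $v^0_{jh}$ if $x_i[h] = 0$ (this node always exists), and choose $v^1_{jh}$ if $x_i[h] = 1$ (this node exists because orthogonality forces $y_j[h] = 0$, which is precisely the case in which both $v^0_{jh}$ and $v^1_{jh}$ are present). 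By construction of $E_W^{(j)}$, consecutive coordinates are connected irrespective of the chosen bits, and $b_W^{(j)}$, $e_W^{(j)}$ are joined to every coordinate-$1$ and coordinate-$d$ node, so the selected sequence forms a valid path whose label concatenation spells $\B P_{x_i} \E$.

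I do not expect genuine obstacles here, since Lemma \ref{lemma:samej} has already done the structural work of pinning matches to a single simple path through a fixed $G_W^{(j)}$. The only care needed is the bookkeeping of the two cases $y_j[h]=0$ versus $y_j[h]=1$ when arguing both that a forced $\zero$-node reveals $x_i[h]=0$ and that the path built in the backward direction exists edge-by-edge; both follow directly from the definitions of $V_W^{(j)}$ and $E_W^{(j)}$.
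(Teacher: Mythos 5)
Your proposal is correct and follows essentially the same approach as the paper: invoke Lemma~\ref{lemma:samej} to restrict to a single $G_W^{(j)}$, then argue coordinate-by-coordinate in both directions. The only cosmetic difference is which variable you case on ($y_j[h]$ in the forward direction and $x_i[h]$ in the backward direction, versus the paper casing on $P_{x_i}[h]$ and $y_j[h]$ respectively), but the underlying reasoning is identical.
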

\begin{proof}
Recall that, by construction, $v^0_{j h} \in V_W^{(j)}$ and $v^1_{j h} \in V_W^{(j)}$ hold for those $h$ such that $y_j[h] = 0$, while $v^0_{j h} \in V_W^{(j)}$ and $v^1_{j h} \not\in V_W^{(j)}$ hold in case $y_j[h] = 1$. We handle the two implications of the statement individually.

($\Rightarrow$) By Lemma~\ref{lemma:samej}, we can focus on the $d$ distinct and consecutive nodes of $G_W^{(j)}$ that match $P_{x_i}$. In particular we know that each character $P_{x_i}[h]$ is matched by either $v^0_{j h}$ or $v^1_{j h}$. Consider vectors $x_i \in X$ and $y_j \in Y$. If $P_{x_i}[h] = \one$ has a match in $G_W^{(j)}$ it means that node $v^1_{j h}$ exists and hence $y_j[h]=0$, implying $x_i[h] \cdot y_j[h] = 0$. If $P_{x_i}[h] = \zero$, by construction we know that $x_i[h]=0$ and, no matter the type of match in $G_W^{(j)}$, it clearly holds that $x_i[h] \cdot y_j[h] = 0$. At this point, we can conclude that $x_i[h] \cdot y_j[h] = 0$ for every $1 \leq h \leq d$, thus $x_i \cdot y_j = 0$.

($\Leftarrow$) Consider vectors $x_i \in X$ and $y_j \in Y$ that are such that $x_i \cdot y_j = 0$. For $h = 1, 2, \ldots, d$, if $y_j[h] = 0$ then $v^0_{j h},v^1_{j h} \in V_W^{(j)}$ and $P_{x_i}[h]$ can match either $v^0_{j h}$ or $v^1_{j h}$ in $G_W^{(j)}$. If $y_j[h] = 1$ it must be $x_i[h] = 0$ since $x_i \cdot y_j = 0$, thus $P_{x_i}[h] = \zero$ and it can match node $v^0_{j h}$, which is always present in $G_W^{(j)}$. Finally, characters \B\ and \E\ can match nodes $b_W^{(j)}$ and $e_W^{(j)}$, respectively. All characters of $\B P_{x_i} \E$ have now a matching node and the definition of the edges in $E_W$ allows to visit all such nodes via a matching path starting at $b_W^{(j)}$ and ending at $e_W^{(j)}$.
\end{proof}

While the previous gadget is useful to check whether a vector $x_i \in X$ is orthogonal to the ones in $Y$ using a \emph{given} sub-pattern $P_{x_i} \in \B \{ \zero, \one \}^d \E$, we need another ``jolly'' gadget that matches \emph{all} sub-patterns in $\B \{ \zero, \one \}^d \E$ (this is useful when $x_i$ is not orthogonal to any $y_j \in Y$).
We will concatenate $2(n-1) = 2n-2$ instances of this ``jolly'' gadget, thus obtaining the graph $G_U = (V_U,E_U,L_U)$ illustrated in \figurename~\ref{fig:G_U}. The reason for using this specific number will be clear when we will describe how to combine this gadget with $G_W$. 

The $j$-th copy of the ``jolly'' gadget $G_U^{(j)}$ has a $\B$-node $b_U^{(j)}$ followed by $\zero$- and \mbox{$\one$-nodes} $v^0_{j h}$, $v^1_{j h}$ and then $\E$-node $e_U^{(j)}$, with $1 \leq j \leq n-1$ and $1 \leq h \leq d$. 
We place the edges $(b_U^{(j)}, v^0_{j 1}), \, (b_j, v^0_{j 1}), \, (v^1_{j d}, e_U^{(j)}), \, (v^1_{j d}, e_U^{(j)})$ for connecting the beginning and ending nodes of each gadget with its inner part.  We connect nodes $v^0_{j h}$ and $v^1_{j h}$ with the edges $(v^0_{j h}, v^0_{j \, h+1}), \, (v^0_{j h}, v^1_{j \, h+1}), \, (v^1_{j h}, v^0_{j \, h+1}), \, (v^1_{j h}, v^1_{j \, h+1})$. In the inner part, we add all edges $(e_U^{(j)}, b_U^{(j+1)})$, for all $j = 1, \ldots, n-1$.
\begin{figure}[ht]
    \centering
    \vspace{20pt}
    \par \noindent
    \includegraphics[scale=.65]{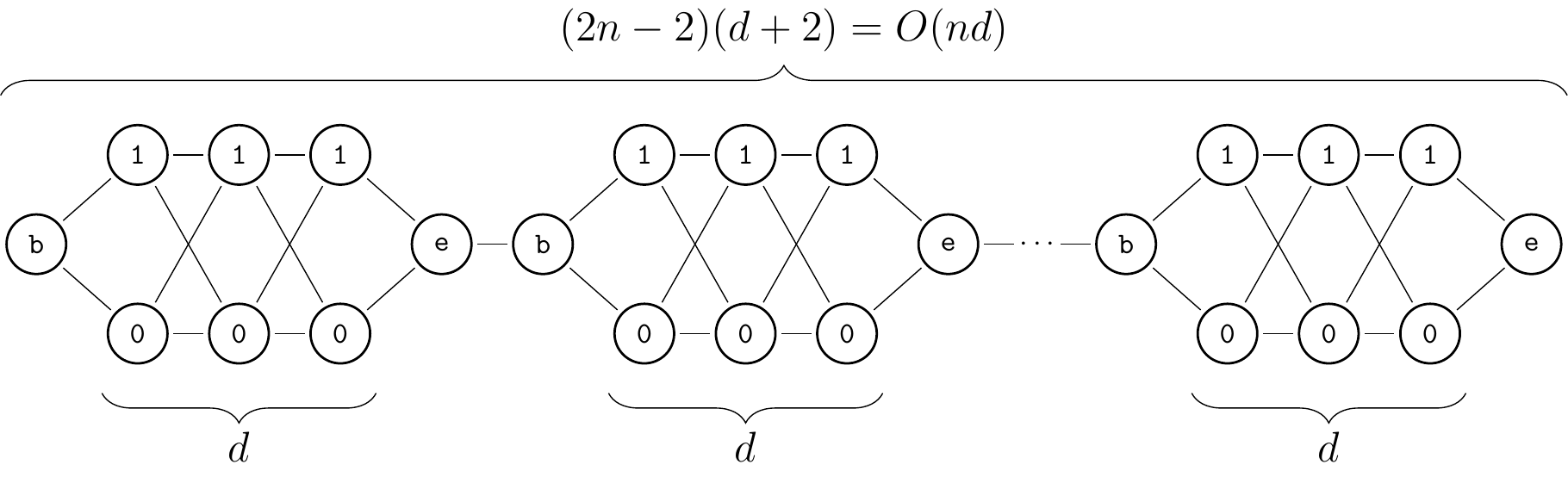}
    \par
    \vspace{20pt}
    \caption{ Gadget $G_U$. }
    \label{fig:G_U}
\end{figure}

\subsection{Putting All Together}
To finish our reduction we use the gadgets described above to build an actual graph $G$ implementing the needed logic. Consider one instance of gadget $G_W$ and two instances of gadgets $G_U$, named $G_{U1}$ and $G_{U2}$. The three gadgets are arranged on three different lines starting with $G_{U1}$ at the top, then $G_W$ in the middle and finally $G_{U2}$ on the bottom line. Sub-structure $G_{U1}^{(n-1+j-1)}$ is placed in such a way to precede $G_W^{(j)}$, which in turn will precede $G_{U2}^{(j)}$, for all $1 \leq j \leq n$. We now connect the three gadgets adding edges $(e_{U1}^{(n-1)}, b_W^{(1)}), \ldots, (e_{U1}^{(2n-2)}, b_W^{(n)})$ and $(e_W^{(1)}, b_{U2}^{(1)}), \ldots, (e_W^{(n)}, b_{U2}^{(n)})$. See Figure~\ref{fig:G}.

The idea is to force the pattern to cross $G_W$ during a match so that it will detect a vector $y_j \in Y$ (represented by $G_W^{(j)}$) orthogonal to some $x_i \in X$ (encoded as $P_{x_i}$), if such a vector exists. As observed earlier, pattern $P$ starts with sequence $\B\B$ and ends with $\E\E$. Moreover, such sequences do not appear anywhere else in the pattern. Thus, we place a new $\B$-node for every $b_{W}^{(j)}$ and we connect these two with an edge as in Figure~\ref{fig:G}, for all $1 \leq j \leq n$. We do the same for $b_{U1}^{(j')}$, for all $1 \leq j' \leq 2n-2$. In the same manner we add new $\E$-nodes for every $e_{W}^{(j)}$ and $e_{U2}^{(j'')}$, for all $1 \leq j \leq n, \, 1 \leq j'' \leq 2n-2$. In this way pattern $P$ can start matching in the graph only in $G_{U1}$ or $G_W$ and end only in $G_W$ or $G_{U2}$. Finally, $2(n-1) = 2n-2$ is the minimum number of instances of sub-gadgets of $G_{U1}$ and $G_{U2}$ required by pattern $P$ to be able to match in $G_W$ any of its sub-patterns (i.e., any sub-pattern from $P_{x_1}$ to $P_{x_n}$).
\begin{figure}[ht]
    \centering
    \vspace{20pt}
    \par \noindent
    \includegraphics[scale=.64]{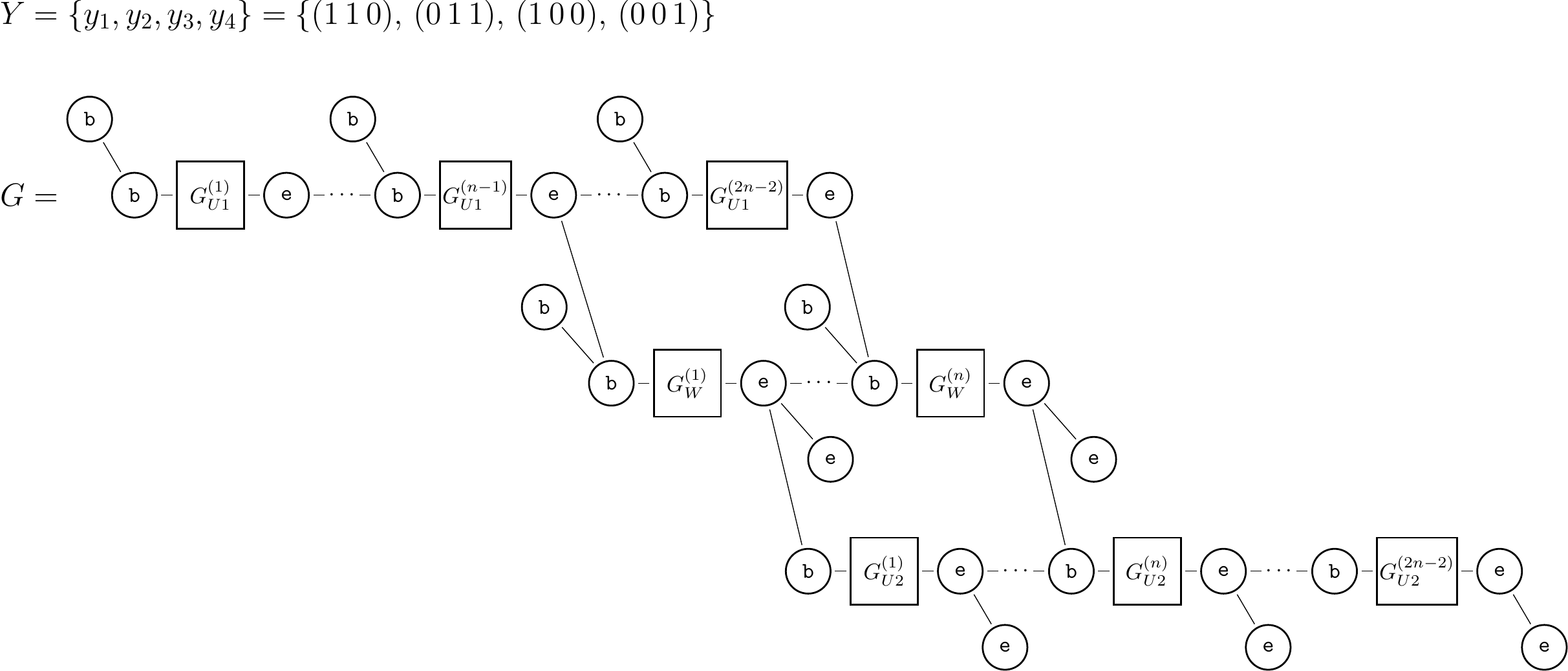}
    \par
    \vspace{20pt}
    \caption{ Final graph $G$. }
    \label{fig:G}
\end{figure}

We now prove that the reduction is correct, first focusing on the sub-patterns of $P$.

\begin{lemma} \label{lemma:patternsubpattern}
Pattern $P$ has a match in $G$ if and only if a sub-pattern $\B P_{x_i}\E$ of $P$ has a match in $G_W$.
\end{lemma}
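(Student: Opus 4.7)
The plan is to exploit two structural observations. First, the pattern $P$ contains the substring $\B\B$ only at its first two positions and $\E\E$ only at its last two positions. Second, in $G$ the only $\B$-labeled nodes are the extra $\B$-nodes and the nodes $b_W^{(j)}$, $b_{U1}^{(j')}$, and the only $\E$-labeled nodes are the extra $\E$-nodes and the nodes $e_W^{(j)}$, $e_{U2}^{(j'')}$; moreover each extra $\B$- or $\E$-node has exactly one neighbor in $G$ (its corresponding $b$- or $e$-node), and no two $b$-nodes (resp.\ no two $e$-nodes) are adjacent in $G$. Combining these, any occurrence of $\B\B$ in $G$ must lie on a pair (extra $\B$-node, $b$-node), and symmetrically for $\E\E$. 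These observations pin down the start and the end of any pattern match.

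For the ($\Leftarrow$) direction, I take an occurrence of $\B P_{x_i}\E$ inside some $G_W^{(j)}$ provided by Lemma~\ref{lemma:matchingGW} and extend it on both sides. To the left I match $P_{x_1},\ldots,P_{x_{i-1}}$ using the consecutive \emph{jolly} sub-gadgets $G_{U1}^{(n+j-i)},\ldots,G_{U1}^{(n+j-2)}$, the last of which is connected by construction to $b_W^{(j)}$. To the right I match $P_{x_{i+1}},\ldots,P_{x_n}$ using $G_{U2}^{(j)},\ldots,G_{U2}^{(j+n-i-1)}$. Because each $G_U^{(\cdot)}$ contains both a $\zero$- and a $\one$-node at every layer, it admits an arbitrary binary word of length $d$ between its $b$- and $e$-nodes, so both extensions exist. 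The outer $\B\B$ and $\E\E$ are matched by prepending the extra $\B$-node attached to the first $b$-node of the walk and appending the extra $\E$-node attached to the last $e$-node. A short index check shows that for all $1\le i\le n$ and $1\le j\le n$ the required indices lie in $[1,2n-2]$, which is exactly why $G_{U1}$ and $G_{U2}$ were built from $2(n-1)$ copies of the jolly gadget.

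For the ($\Rightarrow$) direction, suppose $P$ has a match in $G$. By the observations of the first paragraph, the match begins at an extra $\B$-node attached to some $b_W^{(j)}$ or $b_{U1}^{(j')}$ and ends at an extra $\E$-node attached to some $e_W^{(j)}$ or $e_{U2}^{(j'')}$. Between these endpoints, $P$ decomposes into $n$ blocks of the form $\B P_{x_k}\E$, and since the $d$ inner characters of each $P_{x_k}$ are $\zero$'s and $\one$'s, each such block must be matched strictly inside one sub-gadget (a $G_W^{(\cdot)}$ or a $G_U^{(\cdot)}$). A length argument analogous to Lemma~\ref{lemma:samej} shows that each block occupies exactly one sub-gadget, so the walk visits $n$ sub-gadgets in sequence, linked by $\E\B$-transitions. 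Inspecting the inter-gadget edges, the only transitions between the three lines are $e_{U1}^{(\cdot)}\to b_W^{(\cdot)}$ and $e_W^{(\cdot)}\to b_{U2}^{(\cdot)}$; in particular there is no edge from $G_{U1}$ directly to $G_{U2}$. Since the walk cannot start in $G_{U2}$ nor end in $G_{U1}$, at least one of the $n$ visited sub-gadgets lies in $G_W$, and the corresponding block $\B P_{x_i}\E$ is the desired match in $G_W$.

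The main obstacle I foresee is in the ($\Rightarrow$) direction: because $G$ is undirected, one must rule out walks that bounce inside a sub-gadget or revisit a $b$/$e$-node, and must confirm that the pattern's length budget really forces each $P_{x_k}$ into its own sub-gadget. The cleanest way to handle this is to reuse the counting argument behind Lemma~\ref{lemma:samej}: between two consecutive $\E\B$-boundaries of $P$ there are exactly $d$ non-boundary characters, while the shortest $\zero/\one$-labelled walk between a $b$-node and the following $e$-node inside any sub-gadget uses exactly $d$ inner nodes, leaving no slack for detours.
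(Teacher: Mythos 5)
Your proof is correct and follows essentially the same route as the paper's: for ($\Leftarrow$) you pad the $G_W$ match with jolly sub-gadgets on both sides, and for ($\Rightarrow$) you localize $\B\B$ and $\E\E$, observe that the only inter-line transitions are $G_{U1}\!\to\!G_W$ and $G_W\!\to\!G_{U2}$, and conclude that some block must land in $G_W$. You are actually more careful than the paper on two points: the index bookkeeping for the $G_{U1}$/$G_{U2}$ sub-gadgets, and the explicit claim that no direct $G_{U1}$-to-$G_{U2}$ transition exists. One small remark: the ``obstacle'' you flag at the end (ruling out bouncing walks via the Lemma~\ref{lemma:samej} counting) is not actually needed for this lemma --- the labeling alone confines each block's inner $d$ characters to a single sub-gadget, since the only nodes labeled $\B$ or $\E$ sit at sub-gadget boundaries and a block cannot re-enter them mid-way; the counting argument is only required for the stronger structural claim of Lemma~\ref{lemma:samej} itself.
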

\begin{proof}
For the $(\Rightarrow)$ implication, the $\E\B$-edges can only be traversed once in this direction, as $P$ contains the sequence $\E\B$ but does not contain $\B\E$. For this reason each distinct sub-pattern $\B P_{x_i}\E$ matches a path from either a distinct portion of $G_{U\ell}$ ($\ell=1,2$) or $G_W$. Moreover, each occurrence of $P$ must begin with $\B\B$ and end with $\E\E$. String $\B\B$ can be matched only in $G_{U1}$ or $G_W$ while $\E\E$ is present only in $G_W$ or $G_{U2}$. Hence, in order to have a full match for pattern $P$ there must exist a sub-pattern $\B P_{x_i}\E$ having a match in $G_W$.

The $(\Leftarrow)$ implication is trivial. In fact, if $\B P_{x_i}\E$ has a match in $G_W$ then we can match $\B P_{x_1}\E \ldots \B P_{x_{i-1}}\E$ in $G_{U1}$ and $\B P_{x_{i+1}}\E \ldots \B P_{x_n}\E$ in $G_{U2}$ by construction, and have a full match for $P$ in $G$.
\end{proof}


We are finally ready to prove Theorem~\ref{theorem:Emlowerbound}.

\begin{proof}[Proof of Theorem~\ref{theorem:Emlowerbound}] First, we prove that the reduction is correct, then we analyze its cost and show how a sub-quadratic time algorithm for \pmlg would contradict the \ov~Hypothesis.

\textbf{Correctness.}
We need to ensure that pattern $P$ has a match in $G$ if and only if there exist vectors $x_i \in X$ and $y_{j} \in Y$ which are orthogonal. This follows from Lemma~\ref{lemma:patternsubpattern}, which guarantees that $P$ has a match in $G$ if and only if a sub-pattern $P_{x_i}$ has a match in $G_W$, and the fact that, by Lemma~\ref{lemma:matchingGW}, this holds if and only if $x_i \cdot y_j = 0$.

\textbf{Alphabet size.}
Our alphabet is of size 4. Analogously to the earlier reduction \cite{EGM19} one can use encoding $\alpha(\zero)=\zero\zero\zero\zero$, $\alpha(\one)=\one\one\one\one$, $\alpha(\B)=\one\zero$, and $\alpha(\E)=\zero\one$ and observe that (after small adjustments considered below) there is a bijection from matches before and after applying the encoding. Nodes with labels of length 2 and 4 can be replaced by chains of nodes labeled by single characters each.  In order to make such encoding work, we need to make the pattern start with characters \E\B\B and end with characters \E\E\B to exploit the properties of the sequence \E\B. Moreover, we have to place and connect a new \E-node to each \B-node used to mark the beginning of a viable match and, in the same manner, we need to add a new \B-node after every \E-node used to mark the end of a match. (See our earlier reduction \cite{EGM19} for a full proof of an analogous adjustment.)

\textbf{Cost.} We analyze the cost of the reduction proving that, given sets of vectors $X$ and $Y$ with $n$ vectors each, the corresponding pattern $P$ and graph $G$ can be built in $O(nd)$ time and space. We observe that each $\B P_{x_i}\E$ in $P$ has $d$ symbols that can be either \zero\ or \one\ plus symbols \B\ and \E. Since $P$ has $n$ sub-patterns $\B P_{x_i}\E$ ad an additional \B\ at the beginning as well as one more \E\ at the end, summing everything up we get a length of $m = n(d+2) + 2 = O(nd)$ symbols. As for $G_U$, it has $2n-2$ sub-gadgets each one having $d$ nodes labeled with \zero, $d$ nodes labeled with \one, and nodes $b_U^{(j)}$ and $e_U^{(j)}$. Hence there are $(2n-2) \, (2d + 2) = O(nd)$ total nodes. Each node has a constant number of incident edges (at most $4$) thus their size is $O(nd)$ as well. The same reasoning applies for $G_W$ as it has $O(nd)$ nodes labeled with \zero\ and at most $O(nd)$ nodes labeled with \one, plus those with \B\ and \E. Also in this case, each node has a constant number of incident edges. For connecting $G_W$ to the two instances of $G_U$ we are adding one edge for every node $b_W^{(j)}$ and $e_W^{(j)}$ in $G_W$, which sum up to $O(nd)$ edges. We are adding one additional $b$ node and edge for each sub-gadget $G_{U1}^{(j)}$ and $G_{W}^{(j)}$ and one additional $e$ node and edge for each sub-gadget $G_{W}^{(j)}$ and $G_{U2}^{(j)}$, which again sum up to $O(nd)$. Since the pattern and the graph have size $O(dn)$, we conclude that the cost of our reduction is indeed $O(nd)$ time and space. The alphabet reduction to size 2 considered above adds a constant factor. 

\textbf{Using the \ov~Hypothesis.}
The last step is to show that any $O(|E|^{1-\epsilon} \, m)$-time or $O(|E| \, m^{1-\epsilon})$-time algorithm $A$ for \pmlg contradicts the \ov Hypothesis. Given two sets of vectors $X$ and $Y$, we can perform our reduction obtaining pattern $P$ and graph $G$ in $O(nd)$ time, by observing that $|E| = O(nd)$ and $m = O(nd)$. No matter whether $A$ has $O(|E|^{1-\epsilon} m)$ or $O(|E| \, m^{1-\epsilon})$ time complexity, we will end up with an algorithm deciding if there exist a pair of orthogonal vectors between $X$ and $Y$ in $O(nd \cdot n^{1-\epsilon}d) = O(n^{2-\epsilon}\text{poly}(d))$ time, which contradicts the \ov Hypothesis.
\end{proof}

\section{Deterministic DAGs}
In this section we show that the undirected graph $G$ obtained from the reduction described in Section~\ref{sec:reduction} can be transformed into a deterministic DAG. We first observe that it is immediate to modify the proof of Theorem~\ref{theorem:Emlowerbound} so that it holds for a DAG. Instead, non-trivial changes have to be applied in order to obtain a deterministic DAG, as stated in Theorem~\ref{cor:deterministic-dag}.

\subsection{DAGs} 

Consider the definitions of edges $E_W$ and $E_U$ in the proof of Theorem \ref{theorem:Emlowerbound}. We immediately obtain a directed acyclic graph
just by considering such edges to be directed. From now onward, when referring to $G$ we will intend its directed version. Indeed, each match of the pattern must begin with a $\B\B$-edge, end with an $\E\E$-edge, and lay along a path between them. So the edges can be oriented by construction from left (first $\B\B$-edge) to right (last $\E\E$-edge) and from top ($G_{U1}$) to bottom ($G_{U2}$). 

\subsection{Determinization}
The directed acyclic graph $G$ obtained in the previous paragraph must be further modified to complete the proof of Theorem~\ref{cor:deterministic-dag}. Looking at $G$ it is clear that the non-deterministic out-neighbours are the $\E$-nodes that are connected with $\B$-nodes in forming the $\E\B$-edges of the graph. For instance, node ${e}_{W}^{(j)}$ has an edge to both ${b}_W^{(j+1)}$ and ${b}_{U2}^{(j)}$. This issue can be handled by merging $G_W$ and $G_{U1}$ into a new gadget $G_{WU}$. See Figure~\ref{fig:G_WU} for an example. 

We start from $G_W$ and add parts of $G_{U1}$ when needed. Consider only sub-structure $G_W^{(j)}$ and assume that the first position in which the $\one$-node is lacking is $h$. We place node $v^1_{jh}$ and the edges $(v^0_{j \, h-1},v^1_{jh})$ and $(v^1_{j \, h-1}, v^1_{jh})$, or edge $({b}_{W}^{(j)},v^1_{jh})$ in case $h=1$. Moreover, we place a partial version of sub-structure ${G}_{U1}^{(j')}, j'=n-1+j$ by adding to the graph the nodes and edges of ${G}_{U1}^{(j')}$ that are located between position $h+1$ and node ${e}_{U1}^{(n-1+j)}$ (included). Now we connect the new node $v^1_{jh}$ to the first $\zero$-node and $\one$-node of such substructure. If there are other positions $h' > h$ lacking the $\one$-node, then $v^1_{j \, h'}$ exists in the partial ${G}_{U1}^{(j')}$ and we connect nodes $v^0_{j \, h'-1}$ and $v^1_{j \, h'-1}$ to it. In case the missing $v^1_{j \, h'}$ is the last one ($h' = d$), we connect $v^0_{j \, h'-1}$ and $v^1_{j \, h'-1}$ to the $\E$-node of the partial ${G}_{U1}^{(j')}$. Finally, we place edge $({e}_{U1}^{(n-1+j)}, {b}_{W}^{(j+1)})$. To conclude the merging, we apply such modification to ${G}_W^{(j)}$ and we remove the edge $({e}_{W}^{(j)}, {b}_{W}^{(j+1)})$ for $1 \leq j \leq n-1$. 

\begin{figure}[ht]
    \centering
    \vspace{20pt}
    \par \noindent
    \includegraphics[scale=.55]{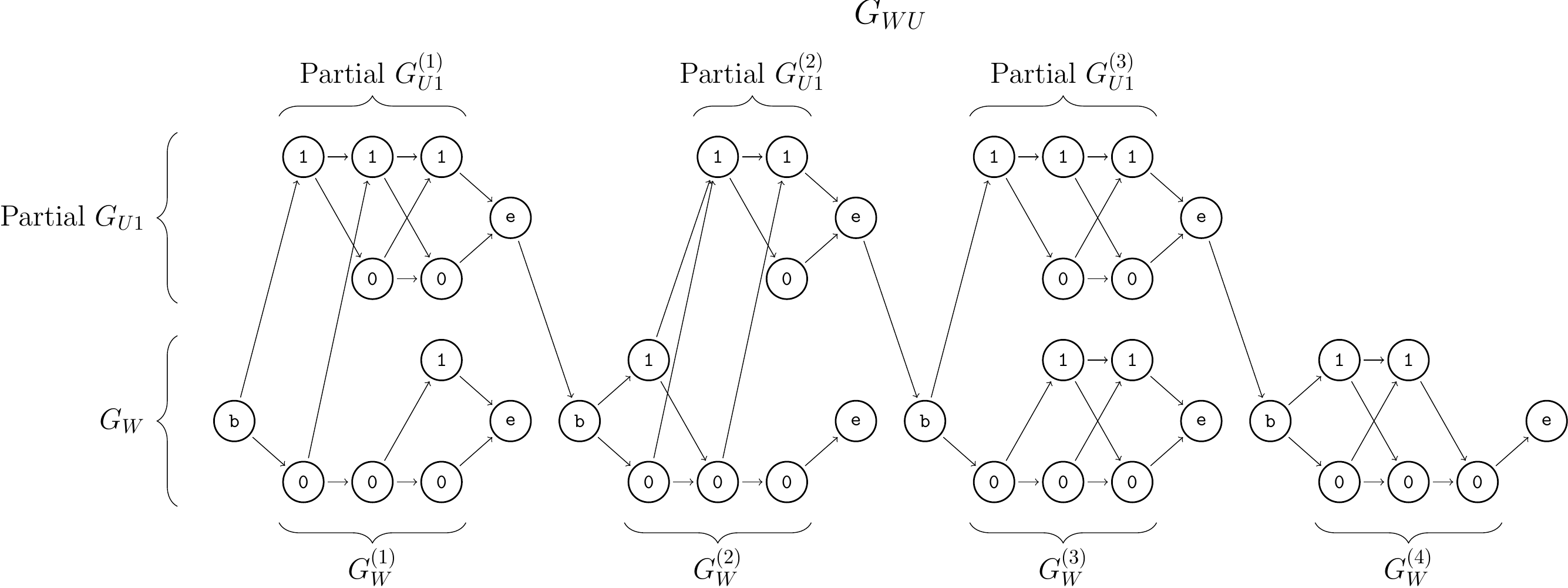}
    \par
    \vspace{20pt}
    \caption{ DAG $G_{WU}$ after merging $G_W$ (from Figure~\ref{fig:G_W}) with $G_{U1}$ (from Figure~\ref{fig:G_U}). }
    \label{fig:G_WU}
\end{figure}

At this point, we place gadget $G_{U2}$ and we connect $G_{WU}$ to it as in $G$. Adding substructures ${G}_{U1}^{(1)}, \ldots, {G}_{U1}^{(n-1)}$ with one additional $\B$-node connected to each of them, and connecting ${G}_{U1}^{(n-1)}$ to ${G}_{W}^{(1)}$ as in $G$, completes the transformation into the new directed acyclic graph, which we call $G'$. Figure~\ref{fig:G'} gives a big picture perspective of such new graph.

Now every node ${e'}_{WU}^{(j)}$ and ${e'}_{U2}^{(j)}$ of graph $G'$ has either only one outgoing edge, or an edge to a $\B$-node and one to an $\E$-node, which means that we removed its only non-deterministic feature.
\begin{figure}[ht]
    \centering
    \vspace{20pt}
    \par \noindent
    \includegraphics[scale=.50]{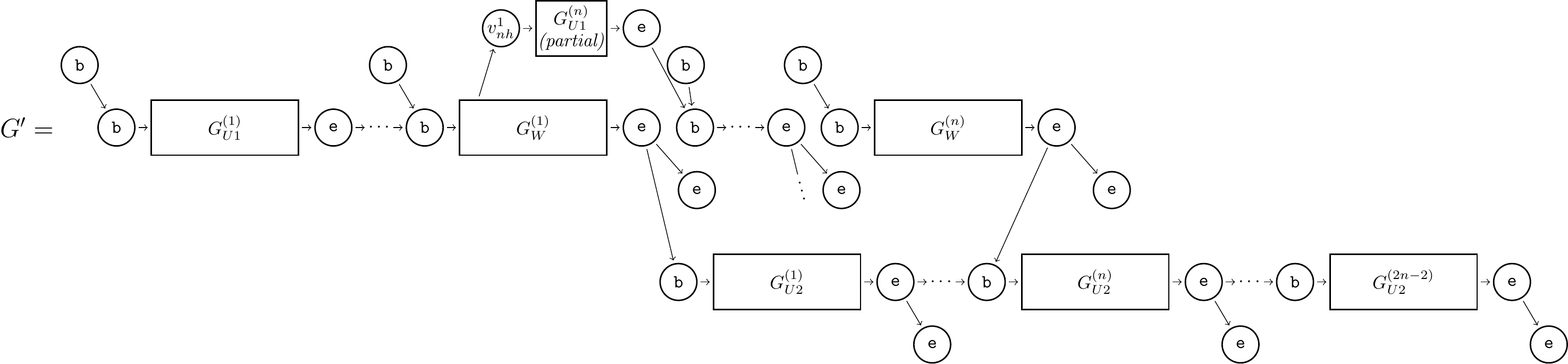}
    \par
    \vspace{20pt}
    \caption{ Final deterministic DAG $G'$. }
    \label{fig:G'}
\end{figure}

\subsection{Correctness and Complexity}
We are now left to prove that the reduction is still correct also for the deterministic DAG $G'$. To this end, we have to adapt the statement of Lemmas \ref{lemma:samej}, \ref{lemma:matchingGW} and \ref{lemma:patternsubpattern} so that they hold for $G'$. The proofs of such lemmas will be almost immediate since they share the most part of the reasoning with the original reduction.

\begin{lemma} \label{lemma:samej_G'}
If sub-pattern $\B P_{x_i} \E$ has a match in $G_{WU}$ then all the nodes matching $P_{x_i}$ share the same $j$ coordinate and have distinct and consecutive $h$ coordinates (i.e., $P_{x_i}$ has a match in $G_W^{(j)}$).
\end{lemma}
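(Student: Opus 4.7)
The plan is to follow the proof of Lemma~\ref{lemma:samej} closely, adapting it to the merged gadget $G_{WU}$. Throughout, I read the notation $G_W^{(j)}$ in the lemma statement as the $j$-th group of $G_{WU}$, namely the union of the original $G_W^{(j)}$ with the added $\one$-nodes $v^1_{jh}$ and the partial copy of $G_{U1}^{(j')}$ that were attached during the merging.

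First I would identify where a match of $\B P_{x_i}\E$ in $G_{WU}$ can start and end. The only $\B$-nodes inside $G_{WU}$ are the nodes $b_W^{(j)}$, since the partial copies of $G_{U1}^{(j')}$ inserted by the merging omit their leading $\B$-node. Thus matching the initial character $\B$ pins down some group index $j$. From $b_W^{(j)}$ the only $\E$-nodes reachable through a further walk of at most $d+1$ nodes are $e_W^{(j)}$ and $e_{U1}^{(n-1+j)}$, both belonging to the $j$-th group.

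Second, I would carry out the length-counting argument from Lemma~\ref{lemma:samej}. The sub-pattern $\B P_{x_i}\E$ has exactly $d+2$ characters, so the matching walk has exactly $d+2$ nodes. Both the original path $b_W^{(j)}\to v^{?}_{j 1}\to\cdots\to v^{?}_{j d}\to e_W^{(j)}$ and any hybrid path $b_W^{(j)}\to v^{?}_{j 1}\to\cdots\to v^1_{jh}\to v^{?}_{j'\, h+1}\to\cdots\to v^{?}_{j' d}\to e_{U1}^{(n-1+j)}$ (which crosses into the partial $G_{U1}^{(j')}$ at the first added $\one$-node $v^1_{jh}$) have exactly $d+2$ nodes, so they saturate the budget. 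Any alternative walk of length $d+2$ is forbidden: leaving group $j$ via the edge $(e_{U1}^{(n-1+j)},b_W^{(j+1)})$ and reaching an $\E$-node in a later group requires strictly more than $d+2$ nodes, while revisiting a node or skipping a position $h$ would also overshoot the budget.

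Third, I would conclude that the matching walk is a simple path visiting exactly one node per position $h \in \{1,\ldots,d\}$ in increasing order, using either original $G_W^{(j)}$ nodes or (after crossing an added $v^1_{jh}$) nodes of the partial $G_{U1}^{(j')}$. Since both kinds of nodes are identified with group $j$ in $G_{WU}$, all matching nodes share the same $j$ coordinate and have distinct, consecutive $h$ coordinates, as claimed.

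The main obstacle is making the length count watertight: one must verify that no combination of returning to $G_W^{(j)}$ from the partial $G_{U1}^{(j')}$ (ruled out because no such back-edges are created by the merging) or of hopping through $(e_{U1}^{(n-1+j)}, b_W^{(j+1)})$ mid-match fits inside the $d+2$-node budget. Once these cases are excluded, the remaining reasoning is inherited verbatim from the proof of Lemma~\ref{lemma:samej}.
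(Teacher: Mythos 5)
Your proof is correct and follows essentially the same approach as the paper's: both rely on the observation that a match of $\B P_{x_i}\E$ must start at some $b_W^{(j)}$, that crossing into the partial $G_{U1}^{(n-1+j)}$ copy is now allowed but does not change the relevant length count (the shortest path from $b_W^{(j)}$ to either $e_W^{(j)}$ or $e_{U1}^{(n-1+j)}$ is still exactly $d+2$ nodes), and that the length budget of $d+2$ then forces one node per consecutive $h$-position within group $j$, exactly as in Lemma~\ref{lemma:samej}. The paper's proof is terser (it simply cites Lemma~\ref{lemma:samej} once the budget observation is made), whereas you spell out the reachability and budget cases; your explicit identification of $e_{U1}^{(n-1+j)}$ (rather than the paper's apparent typo $e_{U1}^{(j)}$) and the remark that the partial $G_{U1}$ copies lack a leading $\B$-node are sound and helpful additions, but the underlying argument is the same.
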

\begin{proof}
During the merging process, we added edges for connecting $G_W$ the partial $G_{U1}$ structure and removed the ones between $G_W^{(j)}$ and $G_W^{(j+1)}$, for $1 \leq j \leq n-1$. Nodes $b_W^{(j)}$ and $e_W^{(j)}$ were left connected with $G_W^{(j)}$. Thus, a match can now span one of the new edges and move into the partial $G_{U1}$. In doing so, such match maintains the same $j$ coordinate and the shortest path between node $b_W^{(j)}$ and either node $e_W^{(j)}$ or $e_{U1}^{(j)}$ still is $d+2$ nodes long ($\B$-node and $\E$-node included). Hence, this lemma follows from Lemma \ref{lemma:samej} and the fact that sub-structures ${G}_W^{(j)}$ in $G'$ are the same as in $G$ but for the fact that the edges are directed.
\end{proof}

\begin{lemma} \label{lemma:matchingGW'}
Sub-pattern $\B P_{x_i}\E$ has a match in the underlying sub-structure $G_W$ of $G_{WU}$ if and only if there is $y_j \in Y$ such that $x_i \cdot y_j = 0$.
\end{lemma}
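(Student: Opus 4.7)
The plan is to piggyback on the original Lemma~\ref{lemma:matchingGW}, observing that under the merging transformation each sub-structure $G_W^{(j)}$ inside $G_{WU}$ retains exactly the same nodes $V_W^{(j)}\cup\{b_W^{(j)},e_W^{(j)}\}$ and the same intra-$G_W^{(j)}$ edges as in the original construction; the only differences are that the edges are now directed left-to-right and that the inter-gadget edges $(e_W^{(j)},b_W^{(j+1)})$ have been removed. Since the hypothesis restricts the match to lie inside the underlying $G_W$, the augmentations from the partial $G_{U1}$ pieces are irrelevant, and Lemma~\ref{lemma:samej_G'} already pins such a match to a simple path from $b_W^{(j)}$ to $e_W^{(j)}$ within a single $G_W^{(j)}$.

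For the $(\Rightarrow)$ direction, assume $\B P_{x_i}\E$ matches inside the underlying $G_W$ of $G_{WU}$. By Lemma~\ref{lemma:samej_G'} the match traverses, for some fixed $j$, nodes $b_W^{(j)}, v^{c_1}_{j1}, v^{c_2}_{j2}, \ldots, v^{c_d}_{jd}, e_W^{(j)}$ with consecutive $h$-coordinates. For each $h$ with $P_{x_i}[h]=\one$, the matching node must be $v^1_{jh}$, whose very existence in $V_W^{(j)}$ forces $y_j[h]=0$ by construction; hence $x_i[h]\cdot y_j[h]=0$. For each $h$ with $P_{x_i}[h]=\zero$, we have $x_i[h]=0$ directly, so again $x_i[h]\cdot y_j[h]=0$. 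Summing over all $h$ yields $x_i\cdot y_j=0$.

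For the $(\Leftarrow)$ direction, given $x_i\cdot y_j=0$, we mirror the construction of Lemma~\ref{lemma:matchingGW}: we start at $b_W^{(j)}$ and, for each $h$ from $1$ to $d$, select the node matching $P_{x_i}[h]$. If $y_j[h]=0$, both $v^0_{jh}$ and $v^1_{jh}$ are present in $V_W^{(j)}$ so the chosen node exists; if $y_j[h]=1$, then orthogonality forces $x_i[h]=0$, i.e.\ $P_{x_i}[h]=\zero$, and $v^0_{jh}$ is present. We end at $e_W^{(j)}$. Every edge traversed is an intra-$G_W^{(j)}$ edge, oriented left-to-right in $G_{WU}$, so the selected sequence is a valid directed path realising the match.

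The only real obstacle is confirming that the merging does not silently introduce or destroy arcs used by these arguments, but this is routine: the construction of $G_{WU}$ adds nodes and edges only \emph{outside} $G_W$ (the partial $G_{U1}$ copies) and deletes only the cross-gadget edges $(e_W^{(j)},b_W^{(j+1)})$, which a match confined to $G_W$ never uses by Lemma~\ref{lemma:samej_G'}. Thus the proof of Lemma~\ref{lemma:matchingGW} transfers verbatim.
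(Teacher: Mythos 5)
Your proof is correct and follows essentially the same route as the paper, which reduces the claim to Lemmas~\ref{lemma:matchingGW} and~\ref{lemma:samej_G'} together with the observation that the sub-structures $G_W^{(j)}$ are preserved intact inside $G_{WU}$; you simply spell out the two directions that the paper's one-line proof leaves implicit.
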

\begin{proof}
The proof follows from Lemmas~\ref{lemma:matchingGW} and~\ref{lemma:samej_G'}, and from the fact that sub-structures ${G}_W^{(j)}$ are entirely present also in graph $G'$.
\end{proof}

\begin{lemma} \label{lemma:patternsubpattern_G'}
Pattern $P$ has a match in $G'$ if and only if a sub-pattern $\B P_{x_i}\E$ of $P$ has a match in the underlying sub-structure $G_W$ of $G_{WU}$.
\end{lemma}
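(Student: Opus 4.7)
The plan is to adapt the argument of Lemma~\ref{lemma:patternsubpattern} to the new graph $G'$. The essential combinatorial properties carry over because the added $\B$- and $\E$-nodes still serve as the unique places where $\B\B$- and $\E\E$-edges appear, and Lemma~\ref{lemma:samej_G'} constrains the shape of any slot opened at a $b_W^{(j)}$.

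For the $(\Leftarrow)$ direction I would proceed by direct construction. Assume $\B P_{x_i}\E$ matches in $G_W^{(j)}$ for some $j$. I would fit the first $i-1$ sub-patterns into $i-1$ of the $n+j-2$ slots available before $G_W^{(j)}$ --- some suffix of the external chain $G_{U1}^{(1)}, \ldots, G_{U1}^{(n-1)}$ followed by the partial $G_{U1}$ bypasses inside $G_{WU}$ that traverse $G_W^{(1)}, \ldots, G_W^{(j-1)}$; the $i$-th sub-pattern would match in $G_W^{(j)}$ as given, exiting via $e_W^{(j)} \to b_{U2}^{(j)}$; and the remaining $n-i$ sub-patterns would be matched consecutively in $G_{U2}^{(j)}, \ldots, G_{U2}^{(j+n-i-1)}$. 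The leading $\B$ and trailing $\E$ of $P$ are consumed by the appropriate added $\B$- and $\E$-nodes. Feasibility follows because each $G_{U1}$ and $G_{U2}$ sub-gadget matches every pattern in $\B\{\zero,\one\}^d\E$ by construction.

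For the $(\Rightarrow)$ direction I would rely on three observations. First, $\B\B$ and $\E\E$ occur only at the endpoints of $P$, and in $G'$ these sequences appear only at the added $\B$-nodes (attached to $b_{U1}^{(j)}$ and $b_W^{(j)}$) and added $\E$-nodes (attached to $e_W^{(j)}$ and $e_{U2}^{(j'')}$); in particular, \emph{no} $\E\E$-edge is attached to any $e_{U1}^{(n-1+j)}$ inside $G_{WU}$. Second, since $P$ contains $\E\B$ but not $\B\E$, each $\E\B$-edge of $G'$ is traversed at most once in the forward direction, forcing the $n$ sub-patterns to occupy $n$ pairwise disjoint slots. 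Third, by Lemma~\ref{lemma:samej_G'}, any slot opened at some $b_W^{(j)}$ has length exactly $d+2$ and therefore either stays inside the original $G_W^{(j)}$ and terminates at $e_W^{(j)}$, or branches into the partial $G_{U1}^{(n-1+j)}$ and terminates at $e_{U1}^{(n-1+j)}$.

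The main obstacle is to argue that at least one of these $n$ slots matches entirely inside the original $G_W$ sub-structure of $G_{WU}$. Since the match of $P$ must end at an $\E\E$-edge, its final slot terminates either at some $e_W^{(j)}$ or at some $e_{U2}^{(j'')}$. In the first case, the third observation immediately gives that the last sub-pattern matches within $G_W^{(j)}$. In the second case, the only entry into $G_{U2}$ from outside is via the edge $(e_W^{(j'')}, b_{U2}^{(j'')})$, so some earlier slot must terminate at $e_W^{(j'')}$, and the third observation forces that slot to match entirely within $G_W^{(j'')}$. In either case a sub-pattern of $P$ matches inside the underlying $G_W$ sub-structure of $G_{WU}$, completing the proof.
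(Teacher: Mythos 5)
Your proof is correct and follows the same high-level route as the paper: use the placement of the $\B\B$- and $\E\E$-edges, the one-shot traversal of $\E\B$-edges, and Lemma~\ref{lemma:samej_G'} to argue that some sub-pattern must fully traverse a $G_W^{(j)}$. Your version is in fact somewhat more careful than the paper's: the paper's proof asserts that ``String $\B\B$ can be matched only in $G_{WU}$, hence the match must start in this gadget,'' but the construction of $G'$ also attaches one extra $\B$-node to each of the external sub-gadgets $G_{U1}^{(1)},\ldots,G_{U1}^{(n-1)}$, so $\B\B$-edges exist outside $G_{WU}$ and the match need not start inside it. You do not rely on that claim; instead you case-split on whether the terminating $\E\E$-edge hangs off some $e_W^{(j)}$ or some $e_{U2}^{(j'')}$, and in both cases trace back to a slot that opens at a $b_W^{(j)}$ and closes at the corresponding $e_W^{(j)}$, which by your third observation forces that sub-pattern to stay inside the original $G_W^{(j)}$ rather than detour through the partial $G_{U1}$ bypass. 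One very small step left implicit is that a slot terminating at $e_W^{(j)}$ must indeed have opened at $b_W^{(j)}$ (and not entered $G_W^{(j)}$ mid-way); this is immediate since in $G_{WU}$ the only in-edges to the $\zero$-/$\one$-nodes of $G_W^{(j)}$ come from within $G_W^{(j)}$ or from $b_W^{(j)}$, and the merge edges only leave $G_W^{(j)}$ towards the partial $G_{U1}$ and never re-enter it.
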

\begin{proof}
For the $(\Rightarrow)$ implication, notice that the $\E\B$-edges are still present also in graph $G'$. Hence, as in Lemma \ref{lemma:patternsubpattern}, the $\E\B$-edges can only be traversed once in this direction and each distinct sub-pattern $\B P_{x_i}\E$ matches a path from either a distinct portion of $G_{U\ell}$ ($\ell = 1,2$) or $G_{WU}$. Moreover, each occurrence of $P$ must begin with $\B\B$ and end with $\E\E$. String $\B\B$ can be matched only in $G_{WU}$, hence the match must start in this gadget. String $\E\E$ is found either in $G_{U2}$ or in $G_{WU}$. Observe that, by construction, once a match for pattern $P$ is started in $G_{WU}$ the only way in which it can be successfully concluded is either by matching $\E\E$ within this gadget or by first matching a portion of $G_{U2}$ and then \E\E. Because of the structure of the graph, in both cases a sub-pattern $\B P_{x_i}\E$ of $P$ must match one of the sub-structures ${G}_W^{(j)}$ that are present in $G_{WU}$.

The $(\Leftarrow)$ implication is trivial and can be proven as in Lemma~\ref{lemma:patternsubpattern}.
\end{proof}

Thanks to the above lemmas, the correctness of this reduction can be easily proven also for graph $G'$.
\begin{theorem}
Pattern $P$ has a match in $G'$ if and only if there exist vectors $x_i \in X$ and $y_{j} \in Y$ which are orthogonal.
\end{theorem}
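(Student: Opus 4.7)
The plan is to derive this theorem as a direct corollary of the three primed lemmas just proved, following the same chain used to prove correctness inside the proof of Theorem~\ref{theorem:Emlowerbound}. Because all the real structural work for $G'$ has been absorbed into Lemmas~\ref{lemma:samej_G'}, \ref{lemma:matchingGW'}, and \ref{lemma:patternsubpattern_G'}, the statement reduces to a one-line composition in each direction.

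For the $(\Rightarrow)$ direction, starting from a match of $P$ in $G'$, I would first apply Lemma~\ref{lemma:patternsubpattern_G'} to extract a sub-pattern $\B P_{x_i} \E$ matching inside the underlying $G_W$ piece of $G_{WU}$. Then Lemma~\ref{lemma:matchingGW'} produces a vector $y_j \in Y$ with $x_i \cdot y_j = 0$, giving the orthogonal pair. For the $(\Leftarrow)$ direction, given $x_i \in X$ and $y_j \in Y$ with $x_i \cdot y_j = 0$, Lemma~\ref{lemma:matchingGW'} supplies a match for $\B P_{x_i}\E$ in the underlying $G_W$ of $G_{WU}$, and Lemma~\ref{lemma:patternsubpattern_G'} promotes it to a full match of $P$ in $G'$: the prefix $\B\B P_{x_1}\E \cdots \B P_{x_{i-1}}\E$ is matched along the first $n-1+i-1$ jolly sub-structures of the $G_{U1}$ chain (including those now sitting inside $G_{WU}$), the match then enters $G_W^{(j)}$ via the $\E\B$-edge at $b_W^{(j)}$, and the suffix $\B P_{x_{i+1}}\E \cdots \B P_{x_n}\E\E$ is matched by leaving $G_W^{(j)}$ into $G_{U2}^{(j)}$ and traversing the remaining copies of $G_{U2}$.

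The only point that needs care — and the one reason the present theorem is not an immediate rewriting of Theorem~\ref{theorem:Emlowerbound} — is that after the merging used to build $G_{WU}$, nodes that once lay only inside $G_W^{(j)}$ now have outgoing edges into the inserted partial $G_{U1}^{(n-1+j)}$, so in principle a match starting at $b_W^{(j)}$ could drift into the jolly insertion and never return to $e_W^{(j)}$. This is precisely what Lemma~\ref{lemma:samej_G'} rules out, by checking that the merging preserves the length of shortest $\B$-to-$\E$ paths and thus forces a length $d+2$ match to progress through consecutive $h$-coordinates inside a single sub-structure $G_W^{(j)}$. Once this is invoked, the orthogonality argument of Lemma~\ref{lemma:matchingGW'} transfers verbatim, and the present theorem follows by composition; no additional computation is required beyond naming the lemmas in the correct order.
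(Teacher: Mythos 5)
Your proposal is correct and matches the paper's own proof, which likewise derives the theorem by composing Lemmas~\ref{lemma:samej_G'}, \ref{lemma:matchingGW'} and~\ref{lemma:patternsubpattern_G'} exactly as was done for Theorem~\ref{theorem:Emlowerbound}. Your added remark about Lemma~\ref{lemma:samej_G'} preventing a match from drifting into the inserted jolly portion of $G_{WU}$ is a fair explanation of why that lemma is needed, even though the paper leaves it implicit.
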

\begin{proof}
The proof follows from Lemmas~\ref{lemma:samej_G'}, \ref{lemma:matchingGW'} and~\ref{lemma:patternsubpattern_G'}, just like in the proof of Theorem~\ref{theorem:Emlowerbound}.
\end{proof}

To conclude the proof of Theorem~\ref{cor:deterministic-dag}, we are now left with proving that the complexity of the new reduction remains the same, namely $O(nd)$, and that our graph can be tweaked to achieve maximum sum of indegree and outdegree $3$. The complexity can be checked immediately looking at the structure of $G'$. First, notice that gadget $G_W$ has size $O(nd)$ and it is still present within the new gadget $G_{WU}$, hence $G'$ must have at least $O(nd)$ nodes and edges. On the other hand, obtaining $G'$ by merging together $G_W$ and $G_{U1}$ means that the size of $G'$ cannot be greater than the size of $G$, which is indeed $O(nd)$. We conclude that the size of graph $G'$ is $O(nd)$, thus the reduction can be performed in $O(nd)$ time and space. Degree $3$ can be easily obtained applying the same modification used in our previous work \cite{EGM19}, as the $G_U^{(j)}$ and $G_W^{(j)}$ substructures are basically the same and pairs of dummy nodes can be placed to reduce the degree of the $\zero$ and $\one$-nodes. 

\section{Zig-zag Matching}
The lower bound given for the \pmlg problem can be extended to the special case of an undirected graph with maximum degree $2$, that is, an undirected path. To this end, we need to modify the reduction defining a new alphabet, pattern and graph. 

The original alphabet $\Sigma = \{ \B,\E,\zero,\one \}$ is replaced with $\Sigma' = \{ \B,\E,\AA,\BB,\X,\Y \}$. Characters $\one$ and $\zero$ are encoded in the following manner:
\begin{align*}
    &\one = \oneAB\\
    &\zero = \zeroAB
\end{align*}
When such encoding is applied, character \X will be used as a separator marking the beginning and the end of the old characters. As an example, the sub-pattern
\[  P_{x_i} = \one~\zero~\one \]
will be encoded as
\[  P'_{x_i} = \X~\oneAB~\X~\zeroAB~\X~\oneAB~\X \quad .\]

\subsection{Pattern}
A new pattern $P'$ is built applying the aforementioned encoding to each one of the sub-patterns $P_{x_i}$, thus obtaining a new sub-pattern $P'_{x_i}$. We then concatenate all the sub-patterns $P'_{x_i}$ placing the new character \Y to separate them, instead of \texttt{eb}. Finally, we place characters $\B\Y$ at the beginning of the new pattern, and $\Y\E$ at the end. Here follows an example:
\begin{align*}
    &P = \texttt{bb 100 e b 101 ee}\\
    &\\
    \setcounter{MaxMatrixCols}{20}
    &\begin{matrix}
    ~ & ~ & \one & ~ & \zero & ~ & \zero &\\
    P' = \B &\Y~\X & \oneAB & \X & \zeroAB& \X &\zeroAB& \X\\
    ~ & ~ & \one & ~ & \zero & ~ & \one &\\
    ~ & \Y~\X &\oneAB& \X &\zeroAB& \X &\oneAB& \X& \Y~\E
    \end{matrix}
\end{align*}

Note that for each sub-pattern we are introducing a constant number of new characters, hence the size of the entire pattern $P'$ still is $O(nd)$.

\subsection{Graph}
The same encoding used for the pattern will be applied to the graph. The strategy is to encode $G_W$ in a linear structure concatenating chains of nodes representing each sub-structure $G_W^{(j)}$ one after the other. 

The positions $h$ in which both a $\zero$- and a $\one$-node are present are replaced by a path that can be matched both by $\zero = \zeroAB$ and $\one = \oneAB$. Positions $h$ with only a $\zero$-node and no $\one$-node are encoded instead with a path that can be matched only by $\zero = \zeroAB$ (see Figure~\ref{fig:newsubstructures}). We use $\X$-nodes to separate these paths.  We denote by $LG_W^{(j)}$ (\emph{Linear $G_W^{(j)}$}) this linearized version of $G_W^{(j)}$. Moreover, given sub-structure $G_W^{(j)}$, two new $\Y$-nodes will mark the beginning and the ending of its encoding. Figure~\ref{fig:LG_W1} depicts how to perform such transformation for $G_W^{(j)}$.

\begin{figure}[ht]
    \vspace{20pt}
    (a) \par \noindent
    {
        \centering
        \includegraphics[height=3cm]{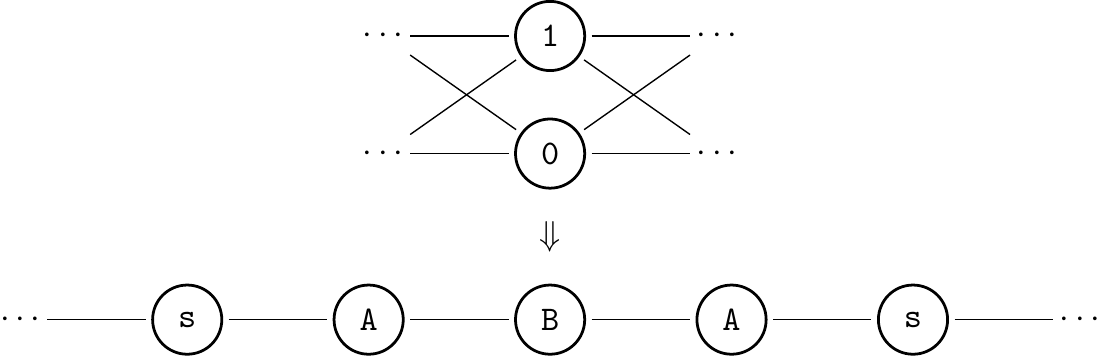}
        \par
    }
    \vspace{25pt}

    (b) \par \noindent
    {   
        \centering
        \includegraphics[height=2.1cm]{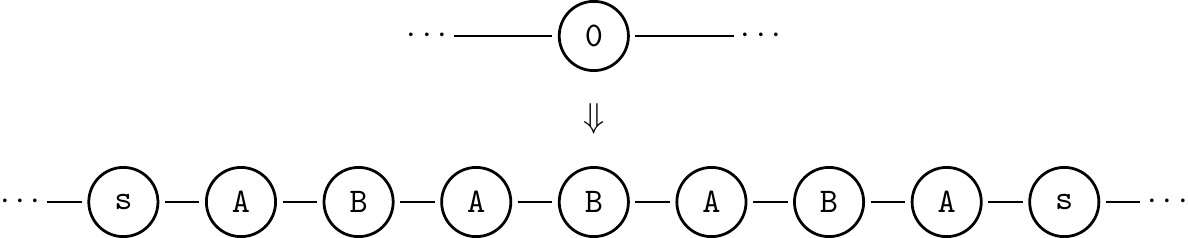}
        \par
    }
    \vspace{20pt}
    \caption{ New substructures. (a) The old substructure is replaced by a chain of nodes that can match either $\X \oneAB \X$ (which represents \one) or $\X\zeroAB\X$ (which represents \zero). (b) The chain of nodes that is replacing a $\zero$-node can match only the string $\X\zeroAB\X$. }
    \label{fig:newsubstructures}
\end{figure}

\begin{figure}[ht]
    \centering
    \vspace{20pt}
    \par \noindent
    \includegraphics[scale=.85]{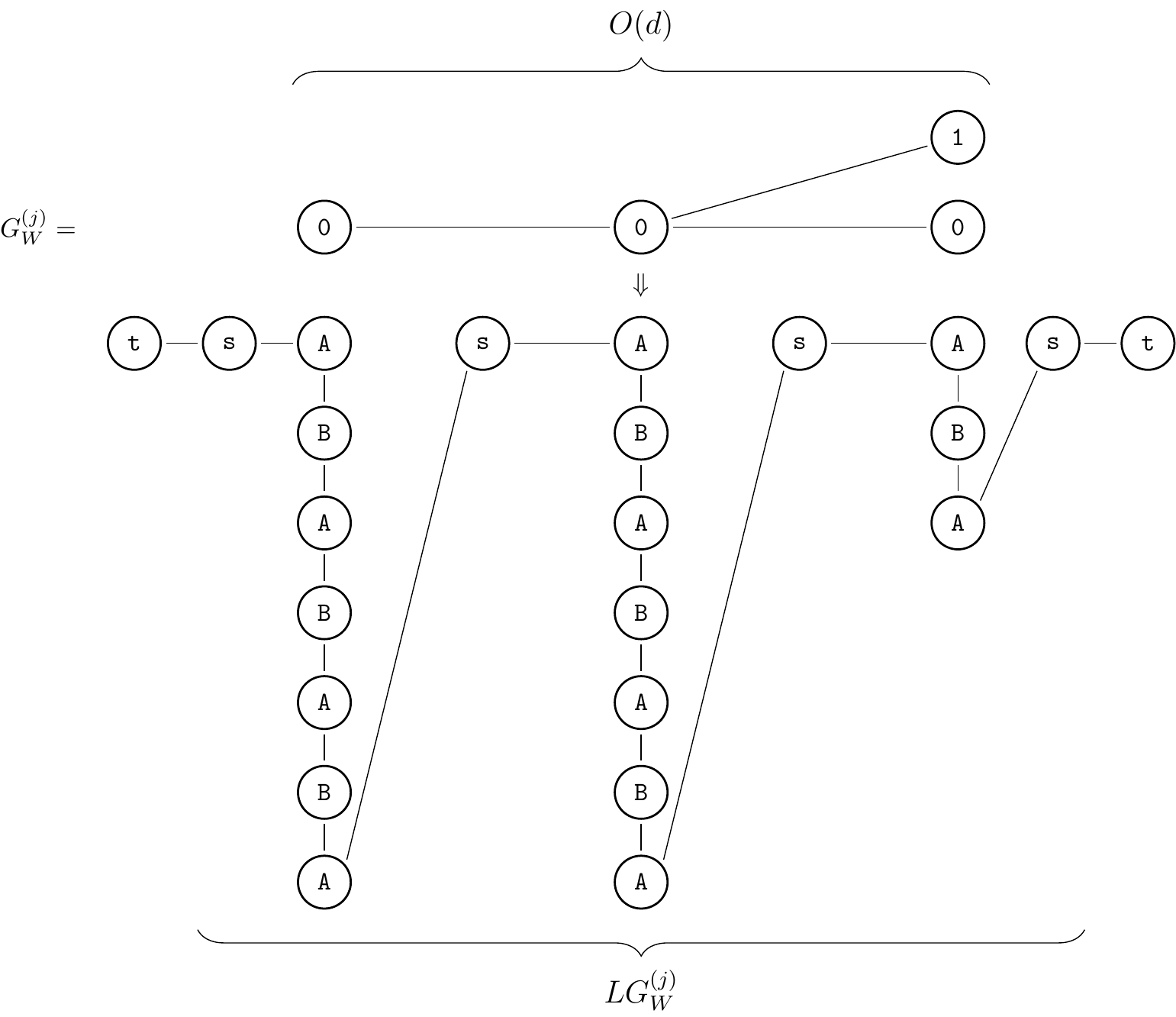}
    \par
    \vspace{20pt}
    \caption{ A row of $G_F$ is converted in a linear structure using characters \X and \Y as markers. }
    \label{fig:LG_W1}
\end{figure}

In a similar manner, $G_U$ is also encoded as a path. We do not need to encode all its $2n-2$ substructures: since the matching path is now allowed to pass through nodes more than once, we only need to encode one of them, in the same manner as done for $G_W^{(j)}$. Let $LG_U$ (\emph{Linear $G_U$}) be the linearized version of a ``jolly'' gadget that was composing the original $G_U$. 

Then, for each $1 \leq j \leq n$, we build structure $LG^{(j)}$ by placing $\Y$-nodes, $LG_U$ instances, $LG_W^{(j)}$, a $\B$-node on the left and an $\E$-node on the right as in Figure~\ref{fig:LG_U-LG_W-LG_U}. 
\begin{figure}[ht]
    \vspace{20pt}
    \centering
    \par \noindent
    \includegraphics[scale=.85]{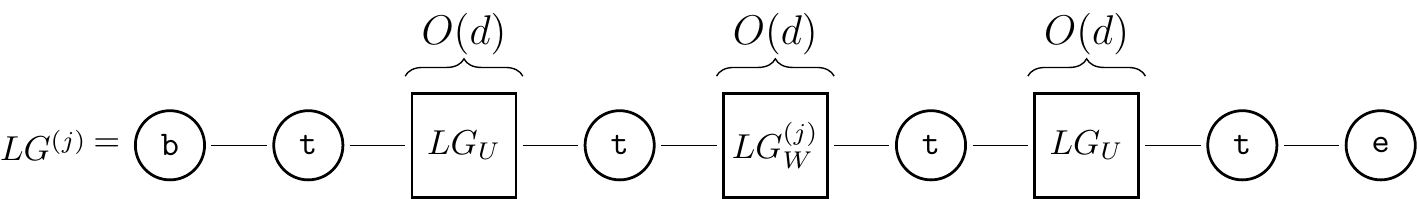}
    \par
    \vspace{20pt}
    \caption{ The $LG_W^{(j)}$ structure surrounded by two instances of $LG_U$. The $\Y$-nodes establish the beginning and the end of a match for a sub-pattern $\Y P'_{x_i}\Y$ while nodes $b$ and $e$ are the starting and ending point for a match of the whole pattern $P'$. }
    \label{fig:LG_U-LG_W-LG_U}
\end{figure} In such structure the $\B$-node and the $\E$-node delimit the beginning and the end of a viable match for a pattern. The $\Y$-nodes are separating the $LG_U$ structures from $LG_W^{(j)}$ and, in general, they are marking the beginning and the end of a match for a sub-pattern $P'_{x_i}$. In order to construct the final graph $LG$ we concatenate all the gadgets $LG^{(1)}, LG^{(2)}, \ldots, LG^{(n)}$ into one chain of nodes. Figure~\ref{fig:LG_Winal} gives a picture of the end result.
\begin{figure}[ht]
    \vspace{20pt}
    \centering
    \par \noindent
    \includegraphics[scale=.85]{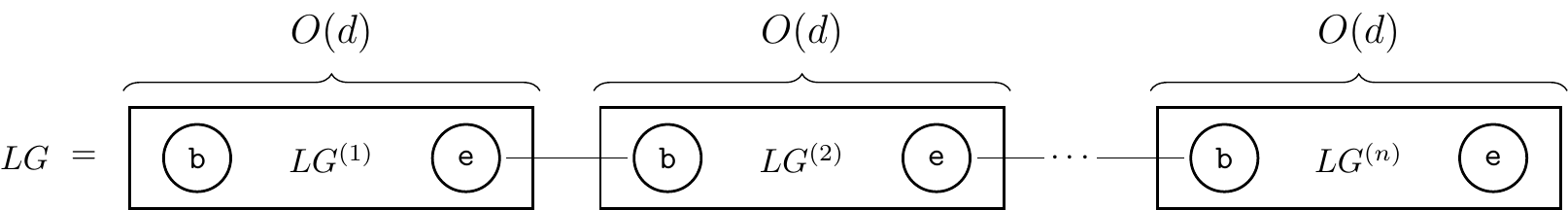}
    \par
    \vspace{20pt}
    \caption{ The final graph $LG$.}
    \label{fig:LG_Winal}
\end{figure} 

No issues arise regarding the size of the graph, since we are replacing every $\zero$-node, or every pair of a $\zero$-node and a $\one$-node, with a constant number of new nodes. By construction, the two gadgets $LG_U$ and $LG_W$ both have size $O(d)$, since for each one of the $d$ entries of a vector we place one of the two possible encodings. In $LG$ there are $n$ instances of $LG_W^{(j)}$, each one surrounded by two $LG_U$ instances. Hence the total size of the graph remains $O(nd)$.
 
\subsection{Putting All Together}
In order to achieve the desired reduction we will prove some properties on $LG$.

We introduce some new notation to simplify the exposition of the upcoming lemmas. We use $t_lLG_W^{(j)}t_r$ to refer to $LG_W^{(j)}$ extended with the $\Y$-nodes on its left and on its right. When referring to the $k$-th \X-character in $P'_{x_i}$ we mean the $k$-th \X-character found scanning $P'_{x_i}$ from left to right; in the same manner we refer to the $k$-th $\X$-node in $LG_W^{(j)}$. Moreover, $\X^{(P'_{x_i})}_k$ denotes the $k$-th \X-character in $P'_{x_i}$, and $s^{ \left( LG_W^{(j)} \right) }_k$ denotes the $k$-th $\X$-node in $LG_W^{(j)}$.

\begin{lemma} \label{lemma:one-to-one_match}
If sub-pattern $\Y P'_{x_i}\Y$ has a match in $t_lLG_W^{(j)}t_r$ starting at $t_l$ and ending at $t_r$, then the $k$-th \X-character in $P'_{x_i}$ matches the $k$-th $\X$-node in $LG_W^{(j)}$, for all $1\leq k \leq d+1$.
\end{lemma}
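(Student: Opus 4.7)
The plan is to exploit the rigid alternation between \X-characters and blocks of \AA/\BB characters. The sub-pattern $\Y P'_{x_i}\Y$ consists of $d+1$ \X-characters separated by $d$ blocks, each block being either $\oneAB$ or $\zeroAB$. By construction, $LG_W^{(j)}$ is a chain with $d+1$ $\X$-nodes $s_1^{(LG_W^{(j)})},\ldots,s_{d+1}^{(LG_W^{(j)})}$ separated by $d$ blocks of $\AA$- and $\BB$-nodes that contain no further $\X$-node. I will show that the sequence of $\X$-nodes matched in order forms a walk on $\{s_1,\ldots,s_{d+1}\}$ whose successive indices differ by at most one, and then close with a short counting argument forcing this walk to be strictly increasing.

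First, I would pin down the two endpoints. Inside $t_lLG_W^{(j)}t_r$ the only $\Y$-nodes are $t_l$ and $t_r$, so the outer \Y-characters of $\Y P'_{x_i}\Y$ must match these two nodes. The only neighbor of $t_l$ inside the substructure is $s_1^{(LG_W^{(j)})}$, so $\X^{(P'_{x_i})}_1$ matches $s_1^{(LG_W^{(j)})}$; symmetrically $\X^{(P'_{x_i})}_{d+1}$ matches $s_{d+1}^{(LG_W^{(j)})}$.

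Next, I would analyze each stretch of the pattern strictly between two consecutive \X-characters. Such a stretch equals $\oneAB$ or $\zeroAB$ and therefore contains neither \X nor \Y, so the corresponding subwalk visits only $\AA$- and $\BB$-nodes. In particular, the subwalk cannot cross any $\X$-node of $LG_W^{(j)}$, and so it is confined to a single block sandwiched between two consecutive $\X$-nodes. Writing $i_k$ for the index of the $\X$-node matching $\X^{(P'_{x_i})}_k$, this confinement forces $i_{k+1} \in \{i_k-1,\,i_k,\,i_k+1\}$: the middle value arises when a zig-zag subwalk returns to its starting $\X$-node, while $\pm 1$ arise when it traverses the block to the other side.

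The counting argument is then immediate: $i_1 = 1$, $i_{d+1} = d+1$, and the $d$ increments $i_{k+1}-i_k$ lie in $\{-1,0,+1\}$ with sum $d$, so every increment equals $+1$ and hence $i_k = k$ for all $k$. I expect the main obstacle to be making the confinement step fully rigorous, namely precisely arguing that a subwalk labelled only with $\AA$'s and $\BB$'s cannot pass through any $\X$-node of $LG_W^{(j)}$; this is exactly what forbids a single transition from ``skipping'' multiple $\X$-nodes, and thus what forces the final counting argument to produce strict monotonicity rather than merely consistency.
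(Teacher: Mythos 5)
Your proof is correct but organizes the argument differently from the paper's. You pin down both endpoints ($i_1=1$ and $i_{d+1}=d+1$), observe that the stretch of pattern strictly between consecutive \X-characters contains only $\AA$/$\BB$-characters and so on the chain can shift the matched $\X$-node index by at most one ($i_{k+1}\in\{i_k-1,i_k,i_k+1\}$), and then close with the counting argument that $d$ increments in $\{-1,0,1\}$ summing to $d$ must all equal $+1$. The paper instead first establishes a pigeonhole claim --- that each of the $d+1$ $\X$-nodes is matched exactly once (it must be matched at least once since the walk crosses from $t_l$ to $t_r$ on a chain, and matching some $\X$-node twice would starve another) --- and then runs a left-to-right induction: once $s_{k-1}$ and $s_k$ are used up, the confinement observation leaves $s_{k+1}$ as the only possibility for $\X_{k+1}$. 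Both proofs rest on the same key local fact that an $\AA/\BB$-only subwalk on a chain cannot skip an $\X$-node; the paper spends its global effort on the exactly-once pigeonhole and then exploits it inside the induction, while you spend it on anchoring both endpoints and a telescoping/counting argument. Your route is arguably slightly more economical, as it derives the exactly-once property as a byproduct rather than needing it up front, and it also handles the $i_{k+1}=i_k$ (zig-zag-and-return) possibility explicitly, which the paper's inductive step only dispatches implicitly via the exactly-once claim.
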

\begin{proof}
First we prove that all the $\X$-nodes in $t_lLG_W^{(j)}t_r$ are matched exactly once by $\Y P'_{x_i}\Y$. By construction, sub-pattern $P'_{x_i}$ has $d+1$ \X-characters, and $LG_W^{(j)}$ has $d+1$ $\X$-nodes. Since we are working on a chain of nodes and the match is starting at $t_l$ and ending at $t_r$, all the nodes between $t_l$ and $t_r$ have to be matched at least once by $P'_{x_i}$. Assume by contradiction that one such $\X$-node is matched more than once. Sub-pattern $P'_{x_i}$ is left with strictly less than $d$ \X-characters available for matching the other $d$ $\X$-nodes and we reach a contradiction. Now we can prove the statement of the lemma by induction on $k$, i.e the index of the \X-characters and $\X$-nodes.

\textbf{Base Case} $k = 1$. The match starts at $t_l$ hence the only node that $\X^{(P'_{x_i})}_1$ can match is the first $\X$-node to the right on $t_l$, i.e., $s^{ \left( LG_W^{(j)} \right) }_1$.

\textbf{Inductive Case} $k > 1$. The inductive hypothesis tells us that all the nodes up to $s^{ \left( LG_W^{(j)} \right) }_k$ have been matched by consecutive \X-characters of $P'_{x_i}$ up to $\X^{(P'_{x_i})}_k$. We have to prove the statement for $k+1$. Starting from node $s^{ \left( LG_W^{(j)} \right) }_k$ the next $\X$-nodes that can be matched by $\X^{(P'_{x_i})}_{k+1}$ are $s^{ \left( LG_W^{(j)} \right) }_{k-1}$ and $s^{ \left( LG_W^{(j)} \right) }_{k+1}$. Character $\X^{(P'_{x_i})}_{k+1}$ cannot match node $s^{ \left( LG_W^{(j)} \right) }_{k-1}$ since it has already been matched by $s^{(P'_{x_i})}_{k-1}$ and, as argued earlier, every $\X$-node can be matched only once. Thus $\X^{(P'_{x_i})}_{k+1}$ has to match $s^{ \left( LG_W^{(j)} \right) }_{k+1}$.
\end{proof}

\begin{lemma} \label{lemma:matchingLGW}
Sub-pattern $\Y P'_{x_i}\Y$ has a match in $t_lLG_W^{(j)}t_r$ starting at $t_l$ and ending at $t_r$ if and only if there exist $y_j \in Y$ such that $x_i \cdot y_j = 0$.
\end{lemma}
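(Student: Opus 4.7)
The plan is to follow the structure of the proof of Lemma~\ref{lemma:matchingGW}, with Lemma~\ref{lemma:one-to-one_match} playing the role that the ``shortest path'' counting played there: once the $\X$-characters of $P'_{x_i}$ are rigidly aligned with the $\X$-nodes of $LG_W^{(j)}$, the remaining analysis collapses to a per-block (i.e.\ per-coordinate) check that mirrors the original argument on a bit-by-bit basis. The vector $y_j$ referenced in the statement is of course the one indexing the sub-structure $LG_W^{(j)}$.

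First I would invoke Lemma~\ref{lemma:one-to-one_match} and use the forced alignment to partition both sides into $d$ \emph{blocks}. The $h$-th pattern block is the substring of $P'_{x_i}$ strictly between $\X^{(P'_{x_i})}_{h}$ and $\X^{(P'_{x_i})}_{h+1}$, which by construction is $\oneAB$ when $P_{x_i}[h]=\one$ (i.e.\ $x_i[h]=1$) and $\zeroAB$ when $P_{x_i}[h]=\zero$ (i.e.\ $x_i[h]=0$). The $h$-th graph block is the sub-chain of $LG_W^{(j)}$ strictly between $s^{(LG_W^{(j)})}_{h}$ and $s^{(LG_W^{(j)})}_{h+1}$, which by construction (see Figure~\ref{fig:newsubstructures}) is either the ``double'' gadget admitting both $\oneAB$ and $\zeroAB$, present precisely when $y_j[h]=0$, or the ``single'' gadget admitting only $\zeroAB$, present precisely when $y_j[h]=1$. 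Because the alignment of $\X$-markers is pinned down, every pattern block must be matched entirely inside its corresponding graph block.

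For the ($\Rightarrow$) direction I would argue: suppose $\Y P'_{x_i}\Y$ matches in $t_lLG_W^{(j)}t_r$. For each $h$, if $x_i[h]=1$ then the $h$-th pattern block is $\oneAB$, which is absorbed only by the ``double'' gadget, forcing $y_j[h]=0$; and if $x_i[h]=0$ then $x_i[h]\cdot y_j[h]=0$ holds trivially. Summing over $h$ yields $x_i\cdot y_j=0$. For the ($\Leftarrow$) direction I would construct the match explicitly given $x_i\cdot y_j=0$: consume $t_l$, then traverse $s^{(LG_W^{(j)})}_{1}$; for each $h=1,\dots,d$, if $P_{x_i}[h]=\zero$ feed $\zeroAB$ through the $h$-th graph block (always possible, since every type of block accepts $\zeroAB$), while if $P_{x_i}[h]=\one$ then $x_i[h]=1$ and orthogonality forces $y_j[h]=0$, so the block is of the ``double'' type and accepts $\oneAB$; finally traverse $s^{(LG_W^{(j)})}_{h+1}$. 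Closing with $t_r$ produces a valid walk matching $\Y P'_{x_i}\Y$.

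The main obstacle is ruling out that a zig-zag match on the bidirectional chain could cross block boundaries and thereby scramble the correspondence between encoded bits and gadget blocks; but Lemma~\ref{lemma:one-to-one_match} is precisely engineered to prevent this, so once it is in hand the argument reduces cleanly to the per-block dichotomy above and inherits the logic of Lemma~\ref{lemma:matchingGW}.
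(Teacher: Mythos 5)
Your proof is correct and follows essentially the same approach as the paper's: invoke Lemma~\ref{lemma:one-to-one_match} to rigidly align the $\X$-characters with the $\X$-nodes, decompose into per-coordinate blocks, and observe that $\oneAB$ fits only the ``double'' gadget (forcing $y_j[h]=0$ when $x_i[h]=1$) while $\zeroAB$ fits every block. The paper phrases this as ``$LG_W^{(j)}$ behaves the same as $G_W^{(j)}$'' and defers to Lemma~\ref{lemma:matchingGW} for the bit-by-bit logic, and is slightly more explicit about the zig-zag mechanics (that $\zeroAB$ traverses the short $\AA\BB\AA$ chain by going forward, backward, and forward again), but your per-block dichotomy captures the same content.
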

\begin{proof}
This property has already been proved for gadget $G_W$ in Lemma~\ref{lemma:matchingGW}, thus what we are left to prove is that $LG_W^{(j)}$ behaves the same as the sub-gadget $G_W^{(j)}$. First recall that in the construction of $LG_W^{(j)}$ we placed an encoded \one if in $G_W^{(j)}$ we had both a $\zero$-node and a $\one$-node in the same position, while we placed an encoded \zero if we had only a $\zero$-node. Lemma~\ref{lemma:one-to-one_match} guarantees that the encoding in $P'$ of a single character of $P$ is aligned with the encoding in $LG_W^{(j)}$ of a single node of $G_W$, preventing (the encoding of) a character of $P$ from matching (the encoding of) multiple nodes of $G_W$ and vice versa. By construction, $\one = \oneAB$ can match the encoding of a $\one$-node while it fails to match the encoding of the $\zero$-nodes, since their encoding involves too many characters. On the other hand, $\zero = \zeroAB$ can match an encoded $\zero$-node with a natural alignment, but it can also match the encoding of a $\one$-node by scanning it forwards, backwards and forwards again. Therefore the logic behind $LG_W^{(j)}$ safely implements the one of $G_W^{(j)}$, and from this point onward, one can follow the same reasoning as in Lemma~\ref{lemma:matchingGW} to complete the proof.
\end{proof}

The main difference with the original proof resides in assuming that a match for $P'_{x_i}$ starts at $t_l$ and ends at $t_r$. This feature is crucial for the correctness of the reduction and can be safely exploited since, as shown in the following, the $\B$- and $\E$-nodes guarantee that in case of a match for $P'$ we will cross the $LG_W^{(j)}$ gadget from left to right at least once.

\begin{lemma} \label{lemma:patternsubpattern_ZigZag}
Pattern $P'$ has a match in $LG$ if and only if there exist $i$ and $j$ such that $i$ is even and sub-pattern $\Y P'_{x_i}\Y$ has a match in $t_lLG_W^{(j)}t_r$ starting at $t_l$ and ending at $t_r$.
\end{lemma}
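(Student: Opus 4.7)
The plan is to adapt the proof of Lemma~\ref{lemma:patternsubpattern} to the zig-zag setting by reducing any occurrence of $P'$ in $LG$ to a walk on the four $\Y$-nodes inside a single $LG^{(j)}$.

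First I would argue that any match of $P'$ must be confined to a single $LG^{(j)}$. The pattern contains exactly one $\B$-character and one $\E$-character, and the only $\B$- and $\E$-labeled nodes of $LG$ are the boundary nodes $b_j, e_j$ of each $LG^{(j)}$. Moving along the chain $LG$ from one $LG^{(j)}$ to the next would force the matching path to visit an extra boundary node, which would use up the sole $\B$ or $\E$ of the pattern before the match completes. Hence the match starts at some $b_j$, ends at $e_j$, and stays inside the same $LG^{(j)}$.

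Next, let $t_1, t_2, t_3, t_4$ denote the four $\Y$-nodes of $LG^{(j)}$, arranged so that the first $LG_U$ lies between $t_1$ and $t_2$, the block $LG_W^{(j)}$ between $t_2$ and $t_3$, and the second $LG_U$ between $t_3$ and $t_4$. Each of the $n+1$ $\Y$-characters of $P'$ must match a $\Y$-node; the first and last must match $t_1$ and $t_4$ respectively, since those are the unique $\Y$-nodes adjacent to $b_j$ and $e_j$. Thus any match induces a walk $p_0 = t_1, p_1, \ldots, p_n = t_4$ along the four-node path $t_1 - t_2 - t_3 - t_4$, where step $k$ consists of matching $P'_{x_k}$ against the segment between $p_{k-1}$ and $p_k$. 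By the jolly property each $LG_U$ segment can match any $P'_{x_k}$ in either direction, and by Lemma~\ref{lemma:matchingLGW} the segment $t_l LG_W^{(j)} t_r$ (with $t_l = t_2$, $t_r = t_3$) matches $\Y P'_{x_k} \Y$ starting at $t_l$ and ending at $t_r$ if and only if $x_k \cdot y_j = 0$.

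The final ingredient is the parity analysis that pins down the ``$i$ even'' condition. Since each step of the walk shifts the position index by $\pm 1$ and $p_0 = t_1$ has index $1$, the index of $p_k$ has parity $1 + k \pmod{2}$; in particular, a forward crossing $t_2 \to t_3$ of $LG_W^{(j)}$ at step $k$ forces $p_k = t_3$ (odd index), so $k$ must be even. For the $(\Leftarrow)$ direction, given $i$ even and a forward match of $\Y P'_{x_i} \Y$ in $t_l LG_W^{(j)} t_r$, I would construct the full walk by bouncing between $t_1$ and $t_2$ through the first $LG_U$ for $k < i$ (reaching $p_{i-1} = t_2$ since $i-1$ is odd), taking the forward step through $LG_W^{(j)}$ at step $i$, and bouncing between $t_3$ and $t_4$ through the second $LG_U$ for $k > i$ (ending at $p_n = t_4$; if the parity of $n$ obstructs this last bounce, the set $X$ is padded with a dummy vector that is not orthogonal to any $y_j \in Y$ to restore the needed parity, preserving the $O(nd)$ size). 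For the $(\Rightarrow)$ direction, any walk from $t_1$ to $t_4$ must traverse the middle edge $\{t_2, t_3\}$ a net of one time, hence at least one crossing goes in the forward direction; picking the index $i$ of such a forward crossing (even by the parity argument) and applying Lemma~\ref{lemma:matchingLGW} delivers the required forward match of $\Y P'_{x_i} \Y$ in $t_l LG_W^{(j)} t_r$. The main obstacle is precisely this parity bookkeeping: one must simultaneously argue that a forward crossing of $LG_W^{(j)}$ is unavoidable in any completing walk and that such a crossing can only occur at an even-indexed sub-pattern.
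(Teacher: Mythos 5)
Your decomposition is the same one the paper uses: confine the match to a single $LG^{(j)}$, read off the sequence $p_0,\dots,p_n$ of $\Y$-nodes, observe that the pattern must cross the middle segment left-to-right at least once, and invoke Lemma~\ref{lemma:matchingLGW} for that crossing. Where you diverge from the paper is that you make the parity step explicit --- ``each step of the walk shifts the position index by $\pm 1$'' --- whereas the paper only asserts the conclusion (``because of our construction, only a sub-pattern on even position can achieve such a match'').

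This explicit claim is the weak point, and it is not established by either you or the paper. Since matches may repeat nodes, a step of displacement $0$ is possible: $\Y P'_{x_k}\Y$ can be matched by a walk that leaves a $\Y$-node, zig-zags on the first two interior nodes of an adjacent chain, and returns to the same $\Y$-node. Concretely, the walk $t, s, m_1, m_2, m_1, s, m_1, m_2, m_1, m_2, m_1, m_2, m_1, s, \dots, s, t$ spells $\Y\X\AA\BB\AA\X\AA\BB\AA\BB\AA\BB\AA\X\cdots\X\Y$, and by choosing the number of $m_1 m_2$ bounces one can realize any mixture of $\oneAB$ and $\zeroAB$, hence any $P'_{x_k}$, without ever leaving the first cell of the segment. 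With $0$-steps admitted, the index of the step that crosses $t_2\to t_3$ is no longer forced to have fixed parity, so ``$k$ even'' does not follow from your inductive index computation. Worse, $0$-steps are in fact unavoidable in your own $(\Leftarrow)$ construction when the number of sub-patterns is even (which is what the $P'^{(1)}/P'^{(2)}$ device requires): an even number of $\pm1$-steps cannot take $p_0=t_1$ to $p_n=t_4$ (net displacement $3$, odd), so if you forbid $0$-steps your walk simply does not exist. You should either (a) replace the $\pm 1$ claim by an argument that bounds the effect of $0$-steps on the parity of the first forward crossing, or (b) weaken the lemma to ``some sub-pattern (of unspecified parity) matches $t_l LG_W^{(j)} t_r$'' for the $(\Rightarrow)$ direction --- the final theorem only needs that --- and keep the ``$i$ even'' hypothesis on the $(\Leftarrow)$ side only. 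As written, the parity bookkeeping you flag as the main obstacle is indeed the obstacle, and it is not surmounted.
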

\begin{proof}
For the ($\Rightarrow$) implication, first observe that the $\B$- and $\E$-nodes in $LG$ are forcing a direction to follow. Let $LG_{Ul}^{(j)}$ and $LG_{Ur}^{(j)}$ be the $LG_U$ gadgets to the left and to the right of $LG_W^{(j)}$, respectively. Since pattern $P'$ starts with a \B\ and ends with an \E, a match can only start at the $\B$-node on the left of $LG_{Ul}^{(j)}$ and end at the $\E$-node on the right of $LG_{Ur}^{(j)}$, for some $j$. Hence $LG_W^{(j)}$ needs to be crossed by a match from left to right at least once. Thus, there must exist a sub-pattern $\Y P'_{x_i}\Y$ that has a match starting at $t_l$ and ending at $t_r$. For such a pattern Lemma~\ref{lemma:matchingLGW} applies. Moreover, because of our construction, only a sub-pattern on even position can achieve such a match.

The ($\Leftarrow$) implication is immediate since given a sub-pattern $\Y P'_{x_i}\Y$ which has a match in $t_lLG_U^{(j)}t_r$ one can match $\B\Y P'_{x_1}\Y \ldots \Y P'_{x_{i-1}}\Y$ in $LG_{Ul}^{(j)}$  and $\Y P'_{x_{i+1}}\Y \ldots \Y P'_{x_n}\Y\E$ in $LG_{Ur}^{(j)}$ and have a full match for $P'$ in $LG$.
\end{proof}

Since Lemma~\ref{lemma:patternsubpattern_ZigZag} gives us a property which holds only if a sub-pattern is in even position, we need to tweak pattern $P'$ to make the reduction work. Indeed, we define two patterns. The first pattern $P'^{(1)}$ is $P'$ itself; the second pattern $P'^{(2)}$ is obtained by swapping the sub-patterns $P'_{x_i}$ on odd position with the next sub-patterns $P'_{x_{i+1}}$ on even position, for every $i = 1, 3, \ldots$. For example, if $n$ is even, we will have:
\begin{align*}
    P'^{(1)} &= \B\Y~P'_{x_1}~\Y~P'_{x_2}~\Y~P'_{x_3}~\Y~P'_{x_4}~\Y~ \ldots ~\Y~P'_{x_{n-1}}~\Y~ P'_{x_n}~\Y\E = P'\\
    P'^{(2)} &= \B\Y~P'_{x_2}~\Y~P'_{x_1}~\Y~P'_{x_4}~\Y~P'_{x_3}~\Y~ \ldots ~\Y~P'_{x_n}~\Y~ P'_{x_{n-1}}~\Y\E
\end{align*}
In this way, if $n$ is even, we can test every sub-pattern against $LG_W$. While $P'^{(1)}$ checks the even positions of $P'$, $P'^{(2)}$ checks the odd ones. If $n$ is odd then the last sub-pattern would not have the chance to be matched against any $G_W^{(j)}$. In such case we can simply add a dummy sub-pattern $\bar{P} = \X~\oneAB~\X~\oneAB~\X \ldots \X~\oneAB~\X$ (with $d$ repetitions of $\oneAB$) at the end of pattern $P$ as it were the last sub-pattern so that its number of sub-patterns can be even. Indeed, observe that $\bar{P}$ corresponds to vector $\bar{x} = (1 1 \ldots 1)$, which has null product only with vector $\bar{y} = (0 0 \ldots 0)$. Hence if $\bar{y} \not\in Y$ then $\bar{P}$ does not have a match in any $LG^{(j)}$, while if $\bar{y} \in Y$ every sub-pattern $P'_{x_i}$ has a match in the $LG^{(j)}$ built on top of $\bar{y}$. This means that $\bar{P}$ does not disrupt our reduction.

Now we are ready to present the end result.

\begin{theorem}
Either $P'^{(1)}$ or $P'^{(2)}$ has a match in $LG$ if and only if there exist vectors $x_i \in X$ and $y_j \in Y$ which are orthogonal.
\end{theorem}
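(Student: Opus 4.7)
The plan is to combine the two ingredients already in hand: Lemma~\ref{lemma:patternsubpattern_ZigZag}, which translates a full match of a pattern in $LG$ into a sub-pattern match at an \emph{even} position against some $t_lLG_W^{(j)}t_r$, and Lemma~\ref{lemma:matchingLGW}, which translates such a sub-pattern match into an orthogonality statement. The only extra work is checking that the pair $(P'^{(1)},P'^{(2)})$ jointly covers every index $i \in \{1,\dots,n\}$, so that no orthogonal pair is missed and no spurious pair is introduced.

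For the main case where $n$ is even, I would apply Lemma~\ref{lemma:patternsubpattern_ZigZag} to $P'^{(1)}=P'$ and obtain that $P'^{(1)}$ matches $LG$ iff there exist $j$ and some \emph{even} $i$ such that $\Y P'_{x_i}\Y$ matches $t_lLG_W^{(j)}t_r$ from $t_l$ to $t_r$. The pattern $P'^{(2)}$ has the same structural form as $P'^{(1)}$ but with the sub-pattern at position $2k$ equal to $P'_{x_{2k-1}}$ and vice versa, so the same lemma applied to $P'^{(2)}$ gives the analogous statement with $i$ even \emph{in $P'^{(2)}$}, which under the swap is precisely $i$ \emph{odd} in the original indexing. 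Taking the disjunction, either of the two patterns matches iff there exists some $i \in \{1,\dots,n\}$ and some $j$ with $\Y P'_{x_i}\Y$ matching $t_lLG_W^{(j)}t_r$; Lemma~\ref{lemma:matchingLGW} then rewrites this as the existence of $x_i \in X$ and $y_j \in Y$ with $x_i\cdot y_j=0$, yielding both directions of the equivalence.

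For the case where $n$ is odd, I would invoke the dummy sub-pattern $\bar{P}$ already described in the excerpt, corresponding to $\bar{x}=(1,1,\dots,1)$, appended so that the number of sub-patterns becomes even and the argument above applies. The verification splits into two sub-cases: if $\bar{y}=(0,\dots,0) \notin Y$, then $\bar{x}$ has no orthogonal partner in $Y$ and, by Lemma~\ref{lemma:matchingLGW}, $\Y\bar{P}\Y$ cannot match any $t_lLG_W^{(j)}t_r$, so the dummy is truly inert; if $\bar{y}\in Y$, then $\bar{y}$ is orthogonal to every vector in $X$, so an orthogonal pair trivially exists, and on the graph side every $P'_{x_i}$ matches the $LG_W^{(j_0)}$ built on $\bar{y}$, so both sides of the equivalence hold simultaneously. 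The main obstacle, and really the only one, is the bookkeeping of the index swap between $P'^{(1)}$ and $P'^{(2)}$ together with this dummy-padding argument; once that is laid out carefully, the theorem follows immediately from the two earlier lemmas without any new combinatorial content.
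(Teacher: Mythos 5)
Your proposal is correct and follows essentially the same route as the paper: both directions are obtained by composing Lemma~\ref{lemma:patternsubpattern_ZigZag} with Lemma~\ref{lemma:matchingLGW}, noting that $P'^{(1)}$ and $P'^{(2)}$ together place every sub-pattern index in an even slot, and handling odd $n$ via the inert dummy sub-pattern $\bar{P}$. Your write-up spells out the odd-$n$ bookkeeping a bit more explicitly than the paper (which disposes of it in the paragraph preceding the theorem), but the argument is the same.
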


\begin{proof}
For ($\Rightarrow$) we assume that either $P'^{(1)}$ or $P'^{(2)}$ have a match in $LG$. By Lemma~\ref{lemma:patternsubpattern_ZigZag} this means that there exists a sub-pattern $P'^{(q)}_{x_i}, \; q \in \{1,2\}$ which has a match in $LG_W^{(j)}$, for some $j$. Lemma~\ref{lemma:matchingLGW} then ensures that $x_i \cdot y_j = 0$, thus $x_i$ and $y_j$ are orthogonal. For the other implication ($\Leftarrow$) we assume that there exists two orthogonal vectors $x_i \in X$ and $y_j \in Y$. Thanks to Lemma~\ref{lemma:matchingLGW} we find a sub-pattern $P'_{x_i}$ matching $LG_W^{(j)}$. By construction, $P'_{x_i}$ has to be in even position either in $P'^{(1)}$ or in $P'^{(2)}$. By Lemma~\ref{lemma:patternsubpattern_ZigZag} this means that either $P'^{(1)}$ or $P'^{(2)}$ has a match in $LG$.
\end{proof}

Theorem~\ref{thm:zig-zag} on page \pageref{thm:zig-zag} follows directly from the correctness of these constructions.



\end{document}